\documentclass[journal]{IEEEtran} 
\usepackage{caption}
\usepackage{xurl}
\usepackage{cite}
\usepackage{amsmath,amssymb,amsfonts}

\usepackage{graphicx}
\usepackage{enumitem}
\usepackage{subcaption}
\usepackage{amsthm}
\usepackage{textcomp}
\usepackage{algorithm}
\usepackage{algpseudocode}  
\usepackage{float}
\usepackage{xspace}
\usepackage[table]{xcolor}
\usepackage{booktabs}
\definecolor{codehl}{RGB}{255,245,200}

\usepackage{bm}
\makeatletter
\AtBeginDocument{\DeclareMathVersion{bold}
\DeclareSymbolFont{NewLetters}{T1}{times}{b}{it}
\SetSymbolFont{operators}{bold}{T1}{times}{b}{n}
\SetSymbolFont{NewLetters}{bold}{T1}{times}{b}{it}
\SetMathAlphabet{\mathrm}{bold}{T1}{times}{b}{n}
\SetMathAlphabet{\mathit}{bold}{T1}{times}{b}{it}
\SetMathAlphabet{\mathbf}{bold}{T1}{times}{b}{n}
\SetMathAlphabet{\mathtt}{bold}{OT1}{pcr}{b}{n}
\SetSymbolFont{symbols}{bold}{OMS}{cmsy}{b}{n}
\renewcommand\boldmath{\@nomath\boldmath\mathversion{bold}}}
\makeatother

\def\BibTeX{{\rm B\kern-.05em{\sc i\kern-.025em b}\kern-.08em
    T\kern-.1667em\lower.7ex\hbox{E}\kern-.125emX}}

\algnewcommand{\Notation}{\item[\textbf{Notation:}]}
\algnewcommand{\Output}{\item[\textbf{Output:}]}

\newcommand\ignore[1]{}
\newcommand{\rtsmarm}{SMARM+\xspace}
\newcommand{\rtsmarmone}{SMARM+PRNG\xspace}
\newcommand{\rtsmarmtwo}{SMARM+FPE\xspace}

\newcommand{\acron}{SMARM+\xspace}
\newcommand{\smarm}{SMARM\xspace}

\newtheorem{definition}{Definition}
\newtheorem{assumption}{Assumption}[section]
\newtheorem{lemma}{Lemma}[section]
\newtheorem{theorem}{Theorem}[section]

\newcommand{\Adv}{\mathbf{Adv}}

\begin{document}

\title{SMARM+: Analyzing and Enhancing Shuffled Measurements for Remote Attestation in Real-Time IoT Settings}
\author{
\IEEEauthorblockN{Amarin Laohajirapan}
\IEEEauthorblockA{\textit{College of Computing,} Prince of Songkla University, Phuket Campus, Thailand\\
Email: 6730621007@psu.ac.th}
\and
\IEEEauthorblockN{Norrathep Rattanavipanon\thanks{Corresponding author: norrathep.r@psu.ac.th}}
\IEEEauthorblockA{\textit{College of Computing,} Prince of Songkla University, Phuket Campus, Thailand\\
Email: norrathep.r@psu.ac.th}
}

\maketitle

\begin{abstract}
Remote attestation (RA) is a lightweight security primitive for detecting software compromise on IoT devices. Traditional RA schemes require atomic, non-interruptible memory measurements, making them difficult to deploy alongside real-time workloads. SMARM addresses this limitation by measuring memory in a secret, shuffled block order, reducing the non-interruptibility period to the duration of a single block measurement. However, SMARM was originally designed for microkernel-based systems and has not been studied in RTOS-driven real-time environments.

In this work, we present the first systematic study of SMARM in real-time RTOS-based setups. We implement SMARM on commodity ARM TrustZone-M hardware running FreeRTOS and Zephyr, and introduce the \emph{Frequency Accuracy Ratio} (FAR) to quantify the extent to which attestation can coexist with real-time execution under varying workloads. Our evaluation shows that SMARM’s real-time compatibility is highly sensitive to block size: large blocks significantly degrade real-time availability, while small blocks incur substantial secure-storage overhead, limiting deployability on memory-constrained devices.

To address this limitation, we propose SMARM+, a family of enhanced SMARM variants consisting of SMARM+PRNG and SMARM+FPE. They are designed to reduce secure-storage requirements while preserving SMARM’s security guarantees and real-time behavior. Our evaluation highlights the trade-off between secure-storage reduction, attestation runtime, and energy overhead, and provides guidance on selecting among SMARM, SMARM+PRNG, and SMARM+FPE for different deployment settings.

\end{abstract}

\begin{IEEEkeywords}
Remote attestation, real-time IoT systems, TrustZone-M, RTOS, IoT security, format-preserving encryption.
\end{IEEEkeywords}


\maketitle

\section{Introduction}
\label{introduction}

Internet-of-Things (IoT) devices have become integral to modern society, widely deployed in many domains, such as smart healthcare, industrial automation, and smart infrastructure.
As their usage continues to grow (approaching 20 billion devices worldwide as of 2025~\cite{statistaConnectionsWorldwide}), they have also become attractive targets for large-scale malware attacks~\cite{kasperskyKasperskyUnveils}.
However, due to the resource-constrained nature, deploying sophisticated security mechanisms (e.g., anti-virus) is often impractical in IoT devices.

To address this challenge, Remote Attestation (RA) has emerged as a lightweight security primitive for detecting software compromise on such devices.
At a high level, it is a challenge-response protocol between a constrained IoT device, referred to as the \emph{prover}, and a more powerful entity acting as the \emph{verifier}.
The goal of RA is for the prover to convince the verifier that it is currently running an expected software configuration, and thus is free from a malware compromise.
As illustrated in Figure~\ref{fig:ra-overview}, a typical RA protocol consists of three steps:

\begin{figure}[!htp]
    \centering
    \includegraphics[width=\linewidth]{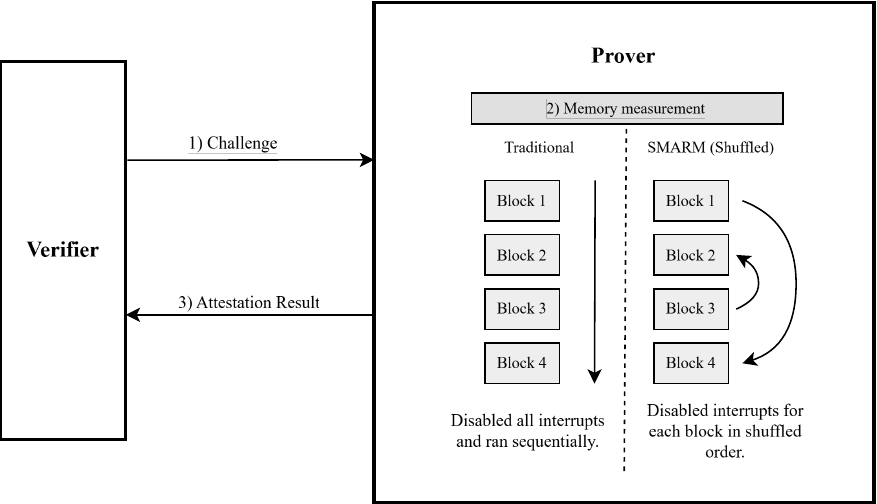}
    \caption{Remote Attestation (RA) workflow and memory measurement strategies.}
    \label{fig:ra-overview}
\end{figure}

\begin{enumerate}[label={(\arabic*)}]
    \item The verifier sends a fresh challenge (e.g., a random nonce or a monotonically increasing counter) to the prover.
    \item Upon receiving the challenge, the prover computes an authenticated integrity measurement -- realized as either a message authentication code (MAC) or a digital signature -- over its program memory along with the received challenge, and returns the result as attestation evidence to the verifier.
    \item The verifier checks the evidence against a known reference value corresponding to a valid software state; a successful match assures the verifier that the prover is not infected by malware.
\end{enumerate}

To prevent \emph{roving} malware~\cite{smarm} that relocates itself to evade detection, traditional RA techniques~\cite{smart,hydra,vrased} require the computation in Step (2) to execute atomically, i.e., with interrupts disabled.
While effective, this approach also prevents normal applications from executing in Step (2), rendering such schemes unsuitable for real-time systems.

SMARM~\cite{smarm} was proposed as an alternative memory measurement strategy to relax the atomicity requirement.
As illustrated in Figure~\ref{fig:ra-overview}, rather than measuring program memory sequentially with interrupts disabled, SMARM allows interrupts to be re-enabled after each memory block is measured.
On its own, this relaxation would permit roving malware to interrupt the measurement task in Step (2) and relocate itself to evade detection.
Hence, SMARM proposes an additional technique to mitigate this by measuring the memory blocks in a \emph{private} and \emph{random} (i.e., shuffled) order.
To be secure, SMARM safeguards the shuffled order in a secure storage that cannot be accessed by untrusted software.
Intuitively, since the roving malware does not know which memory blocks have already been measured, its probability of successfully evading detection is significantly reduced.
With this technique, SMARM reduces the non-interruptibility period from the entire program memory to a single-block measurement, substantially improving real-time availability of RA.

In this work, we observe two limitations of SMARM.
First, SMARM was designed and prototyped atop the HYDRA architecture~\cite{hydra}, which relies on a formally verified seL4 microkernel~\cite{sel4} to secure the measurement task and store the shuffled order.
While this design provides strong security guarantees due to the formal verification of seL4, real-time IoT deployments commonly rely on lightweight RTOSes such as FreeRTOS, rather than microkernel-based systems.
As such, it limits the practicality and wide-scale adoption of SMARM in RTOS-based systems.
Second, SMARM was evaluated in experimental settings that do not consider real-time workloads.
Although the results show a significant reduction in the non-interruptibility period, it remains unclear whether SMARM can truly co-exist with real-time systems.

In light of these limitations, this work makes the following contributions:

\begin{itemize}\sloppy
    \item \textbf{Design and implementation of SMARM on RTOS-based platforms (Section~\ref{sec:porting}).}
    We present a design and implementation of SMARM that is compatible with widely-adopted RTOSes.
    The key challenge is to preserve SMARM's security guarantees without relying on a formally verified microkernel, while maintaining compatibility with RTOS-based execution environments.
    To resolve this challenge, our design leverages the ARM TrustZone-M security extension as a trust anchor in place of seL4.
    TrustZone-M is increasingly available on modern MCU platforms~\cite{armCommunity}, making our solution practical and readily deployable on off-the-shelf real-time IoT devices.
    
    \item \textbf{Evaluation of SMARM under real-time workloads (Section~\ref{sec:evaluation}).}
    We present the first systematic evaluation of SMARM in real-time environments using two widely deployed RTOSes, FreeRTOS and Zephyr.
    To quantify real-time coexistence during attestation, we introduce a new metric, \emph{Frequency Accuracy Ratio (FAR)}.
    Our results show that SMARM's real-time compatibility is highly sensitive to the measurement block size: small blocks preserve near-ideal task execution, whereas larger blocks significantly degrade real-time availability.
    On the other hand, achieving real-time compatibility requires smaller blocks, which substantially increase SMARM's secure-storage requirement, exposing a key limitation for deployment on highly resource-constrained MCUs.
 
    \item \textbf{Addressing SMARM limitation (Section~\ref{sec:resmarm-design}-\ref{sec:rtsmarm-evaluation}).}
    We propose \acron, enhanced variants of SMARM that reduce secure-storage requirements while preserving real-time compatibility and security guarantees.
    The first variant (\rtsmarmone) reduces storage from $O(n \log n)$ to $O(n)$ bits while the second (\rtsmarmtwo) further reduces storage to $O(1)$.
    Our evaluation shows that both variants incur different performance overheads, demonstrating a trade-off between secure-memory usage, attestation latency, and energy consumption. 

\end{itemize}

\section{Background}\label{sec:background}
This section reviews the background of SMARM and ARM TrustZone-M architecture used as basis in this work.

\subsection{SMARM: Shuffled Measurements Against Roving Malware}\label{subsec:smarm}

Remote attestation (RA) for low-end embedded devices has traditionally
relied on \emph{atomic} memory measurements, as in SMART, HYDRA,
and VRASED, where the entire attested memory region is hashed in a
single non-interruptible sweep with interrupts disabled~\cite{smart,hydra,vrased,ra_survey}. These designs provide strong security guarantees but inherently conflict with real-time workloads, since they block all interrupts for a period that scales with the size of the attested memory.
The key idea of SMARM~\cite{smarm} is to enhance the RA workflow (see Figure~\ref{fig:ra-overview}) by measuring memory in a shuffled (i.e., unpredictable and private) order. Specifically, SMARM partitions the prover's memory into $n$ fixed-size blocks,
$M = \{M_1, \ldots, M_n\}$, which are measured according to a one-time shuffled order rather than a fixed/predictable sequence.

Figure~\ref{fig:smarm_state} illustrates the SMARM workflow.
Upon receiving a one-time challenge, $chal$, from the verifier, a secure environment on the prover computes an attestation response via the following steps:

\begin{figure}[!htp]
    \centering
    \includegraphics[width=\linewidth]{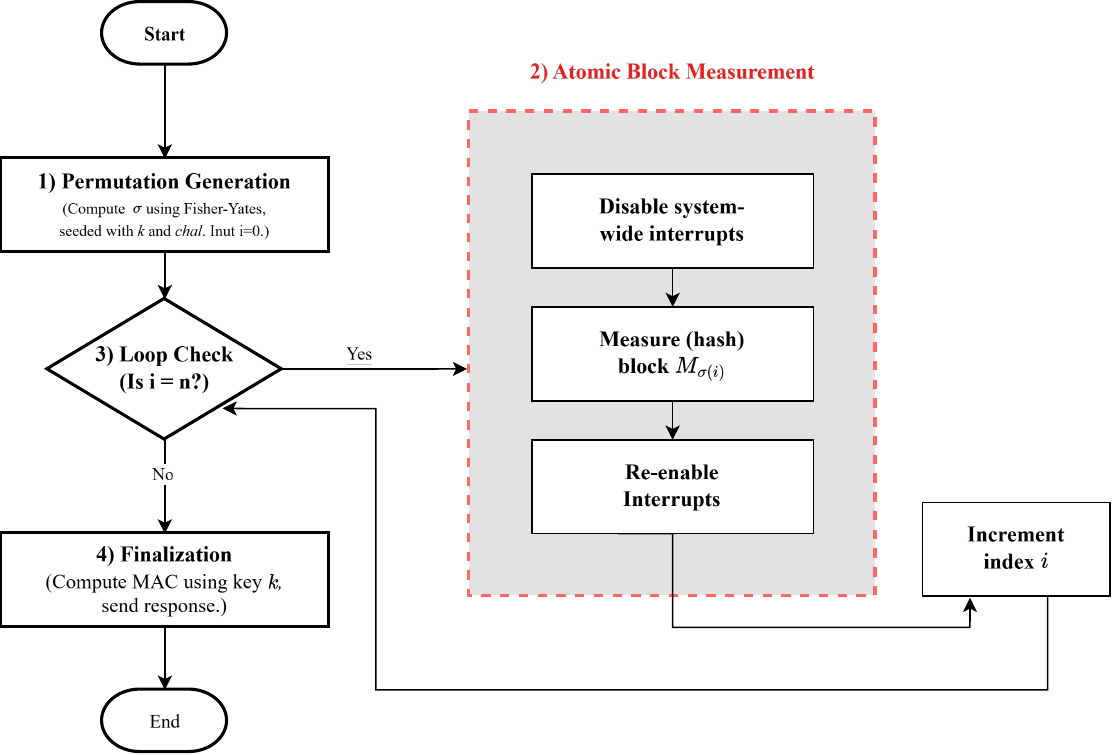}
    \caption{The workflow of SMARM~\cite{smarm}}
    \label{fig:smarm_state}
\end{figure}

\begin{enumerate}
    \item \textbf{Permutation Generation.}
    The prover computes a permutation $\sigma$ over the memory block indices $[1,\ldots,n]$, where $\sigma$ determines the measurement order for the current attestation instance, i.e., the block $M_{\sigma(i)}$ is measured at step $i$.
    SMARM's main security requires $\sigma$ to be \emph{fresh} and \emph{private} for each attestation instance;
    to achieve these, SMARM instantiates this permutation using the Fisher-Yates shuffle method~\cite{fisherYates}:
    \[
        \sigma = \mathsf{FisherYates}(seed, [0,\ldots,n-1]),
    \]
    where its PRNG $seed$ is derived from $k$ and $chal$: $k$ is a pre-shared secret key between the prover and the verifier, and $chal$ is the verifier’s challenge.
    After this, it stores $\sigma$ in secure storage and initializes the loop counter $i$ to $0$.

    \item \textbf{Atomic Block Measurement.}
    At iteration $i$, the prover's secure environment retrieves $\sigma(i)$ from secure storage, disables system-wide interrupts, and measures (i.e., hashes) the corresponding memory block $M_{\sigma(i)}$.
    After the measurement, it re-enables interrupts and increments $i$.

    \item \textbf{Loop Check.}
    The secure environment checks whether all memory blocks have been measured by testing whether $i = n$.
    If not, it repeats the previous step.

    \item \textbf{Finalization.}
    After all blocks in $M$ have been measured, the prover's secure environment computes a MAC over the final measurement using the key $k$ and sends the result to the verifier as the attestation response.
    With possession of $k$ and $chal$, the verifier can reconstruct the permutation used in the current instance and thereby validate the received attestation response against the expected measurement of~$M$.
\end{enumerate}

\paragraph{\textbf{Security}}
SMARM considers a threat model in which the adversary can exploit software vulnerabilities on the prover to gain control over all software except for the secure execution environment and secure storage.
In particular, the adversary is allowed to read/write and execute arbitrary code anywhere in the prover's memory.
In addition, SMARM assumes a \emph{Knowledge of Future Volume} (KFV) malware, which may use the aforementioned capabilities to interrupt the secure execution after each block measurement and learn \emph{how many} memory blocks have already been measured.

Under this threat model, SMARM provides a theoretical analysis of its defense.
Assuming the worst-case scenario in which the malware fits entirely within a single memory block, the optimal evasion strategy for KFV malware is to interrupt after each block measurement and relocate itself to a different block.
For large $n$, this strategy yields an evasion probability converged
to approximately $37\%$~\cite{smarm}.
To increase the likelihood of detection, SMARM proposes repeating
the shuffled measurements multiple times for each attestation instance;
for example, performing the shuffled measurements seven times reduces
the evasion probability to below $10^{-3}$~\cite{smarm}.

To meet security, SMARM assumes architectural support for a secure execution environment that provides the following security guarantees:
\begin{itemize}
    \item[(S1)] \emph{Controlled execution:} the attestation code can only be invoked and return via well-defined entry and exit points (i.e., no mid-code entry or early exit), and it may only be interrupted between consecutive block measurements;
    \item[(S2)] \emph{Key secrecy:} the attestation key~$k$ is accessible exclusively to the attestation code; and
    \item[(S3)] \emph{Private measurement order:} the order in which memory blocks are measured must not be known by KFV malware. This is realized by keeping the permutation~$\sigma$ in memory that is inaccessible to untrusted software.
\end{itemize}

\begin{figure}[!htp]
    \centering
    \includegraphics[width=\linewidth]{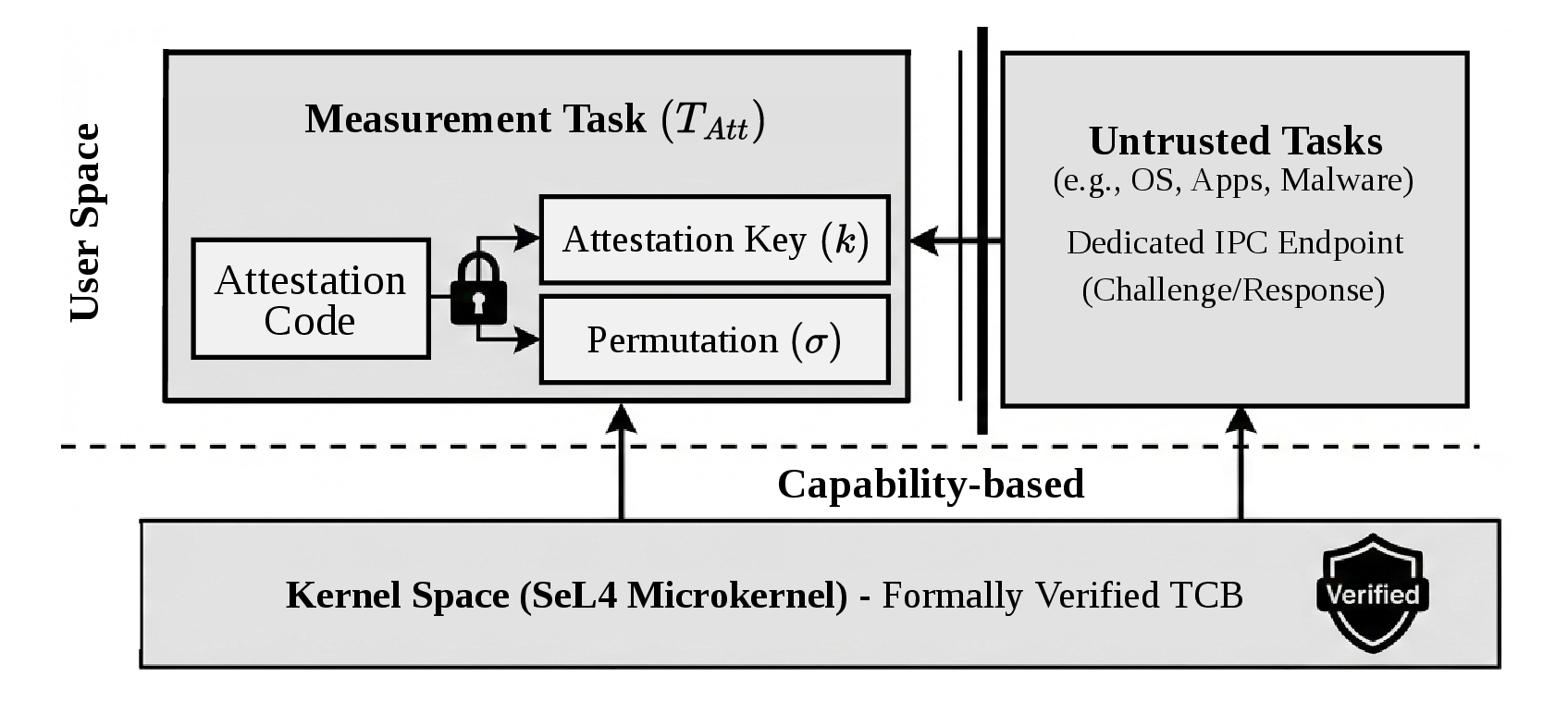}
    \caption{SMARM prototype atop the HYDRA architecture.}
    \label{fig:hydra_impl}
\end{figure}

\paragraph{\textbf{Implementation}}
In its prototype (shown in Figure~\ref{fig:hydra_impl}), SMARM realizes this secure execution environment using the HYDRA architecture~\cite{hydra}, which targets MMU-equipped devices and is built atop the formally verified seL4 microkernel~\cite{sel4}.
Leveraging seL4's verified isolation properties, HYDRA executes the attestation code within a privileged user-space task, referred to as the \emph{measurement task}.
This task is isolated from all other user-space tasks, can access all device memory required for measurement while its own memory remains inaccessible to other tasks.

To enforce (S1), the measurement task can only be invoked via a dedicated IPC endpoint and returns solely with the attestation response.
At runtime, the measurement task is initially assigned the highest priority, ensuring that it cannot be preempted while measuring a memory block.
After completing each block measurement, the task lowers its priority and explicitly yields the CPU execution, allowing other tasks (malicious or real-time tasks) to execute between consecutive measurements.
When the measurement task is scheduled again, it restores its priority to the highest level and proceeds with the next atomic block measurement.
Finally, HYDRA configures access control of memory regions in seL4 such that only the measurement task can access $k$ and $\sigma$, thereby satisfying (S2) and (S3).

\paragraph{\textbf{Evaluation}}
The security of SMARM fundamentally relies on the secrecy of the permutation~$\sigma$, which necessitates secure storage that is inaccessible to untrusted software.
For a prover memory $M$ partitioned into $n$ blocks, SMARM requires at least
$|\sigma| = n \cdot \lceil \log_2 n \rceil$ bits of secure storage to store $\sigma$, making the overhead linear in~$n$.

Reducing the secure storage requirement therefore necessitates lowering the value of $n$.
However, as each block has size $B = |M|/n$, a smaller $n$ implies a larger block size to be measured atomically.
This, in turn, increases the non-interruptibility period $t_{\text{disabled}}$ during attestation in SMARM.

As a result, SMARM exhibits an inherent trade-off between secure storage size and real-time availability.
SMARM evaluates this trade-off on the i.MX6 SabreLite platform, a development board representative of low-cost IoT devices\cite{smarm,sabreLite}.
Figure~\ref{fig:smarm_tdisabled} illustrates the tension: increasing the block size $B$ (equivalent to reducing $n$) alleviates the secure storage requirement, but it also increases $t_{\text{disabled}}$.
Conversely, selecting a small $B$ minimizes $t_{\text{disabled}}$ (improving real-time availability during attestation) at the cost of a much larger secure storage overhead, which may be impractical for resource-constrained devices.

\begin{figure}[!h]
    \centering
    \includegraphics[width=\linewidth]{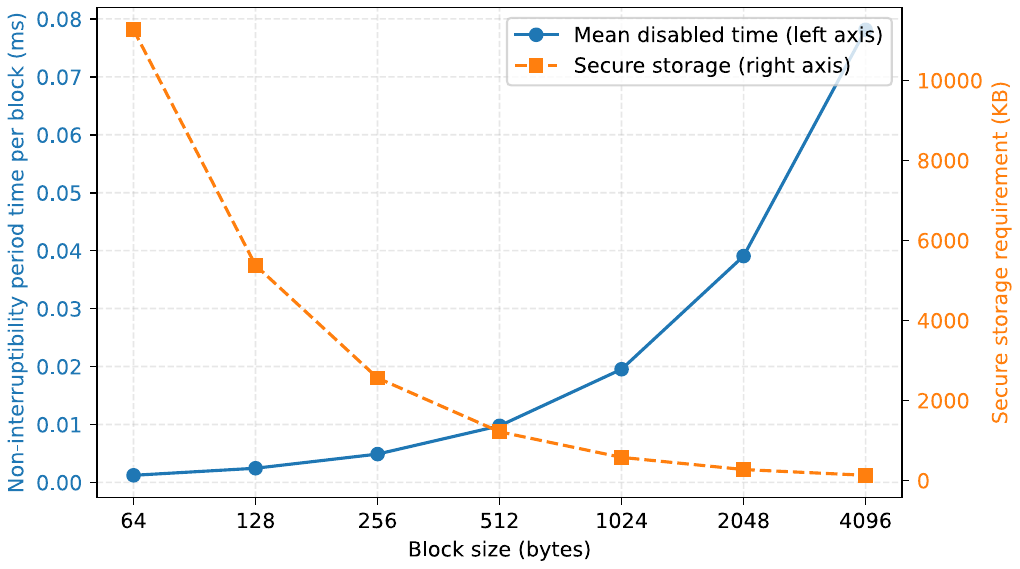}
    \caption{SMARM trade-off between non-interruptibility period $t_{\text{disabled}}$
    and secure storage requirement $|\sigma|$
    for a fixed memory size $|M| = 256\,\text{MB}$ on the i.MX6 SabreLite device.}
    \label{fig:smarm_tdisabled}
\end{figure}

In addition, SMARM performed an additional experiment that evaluates the overall attestation runtime under different values of two parameters: $t_{\text{disabled}}$ and $t_{\text{slice}}$, where the latter represents a time-slice duration that a normal application task runs between consecutive block measurements.
The results showed that both $t_{\text{disabled}}$ and $t_{\text{slice}}$ directly affect the total attestation time with their high value leading to a longer attestation time.
Nonetheless, the work in~\cite{smarm} \emph{did not} evaluate the extent to which the SMARM technique impacts the real-time guarantees of time-sensitive tasks, leaving it unclear which block size should be used in a given strict real-time setting.

\subsection{TrustZone-M Architecture}\label{subsec:tzm}

ARM TrustZone for Cortex-M~\cite{arm-trustzone-cm} (or TrustZone-M in short) provides a production-ready Trusted Execution Environment (TEE) for low-cost IoT devices.
TrustZone-M is commercially available on Cortex-M microcontrollers, which are estimated to be deployed in over 22 billion devices worldwide~\cite{armCommunity}.
It provides hardware-enforced isolation by partitioning software execution into two isolated runtime environments: the \emph{Secure} and \emph{Non-Secure} worlds.

Following the TEE's design principle, TrustZone-M runs untrusted application software (including RTOS and user applications) in the Non-Secure world, while placing small security-critical software in the Secure world.
Although isolated, both worlds still execute on the same physical CPU core, sharing the system's interrupts;
these interrupts are globally managed by the Nested Vectored Interrupt Controller (NVIC), allowing interrupts to happen cross-world boundaries.
As a result, execution in the Secure world may be interrupted by the Non-Secure world; also it is possible for the Secure world to disable all sources of interrupts (e.g., via the \texttt{CPSID} instruction), thereby preventing the Non-Secure world from executing and enabling atomic execution of a Secure-world function.

In this work, we rely on the following security properties offered by TrustZone-M:

\begin{itemize}
    \item \textbf{Isolation at Hardware Level.}
    TrustZone-M provides strong separation between the Secure and Non-Secure worlds through dedicated hardware components: the Security Attribution Unit (SAU) and the Implementation-Defined Attribution Unit (IDAU), which jointly define the security attribution of memory and peripherals at the hardware level~\cite{arm-trustzone-cm}.
    These components govern access to all system resources including program/data memory and peripherals.
    As a result, code executing in the Non-Secure world cannot read or modify Secure-world code or data, even in the presence of a fully compromised Non-Secure software stack.

    \item \textbf{Restricted Secure-World Entry.}
    TrustZone-M restricts access to Secure-world functionality to a set of explicitly defined entry points located in the \emph{Non-Secure Callable} (NSC) region.
    This region is part of Secure memory and is protected from modification by Non-Secure software.
    Together with TrustZone-M's secure state transition mechanism (enforced in hardware), this restriction ensures that Secure-world functions are invoked only in their entirety.
\end{itemize}

Recent studies of TrustZone-M based IoT systems highlight that misconfiguration,
improper partitioning of peripherals, and weak key storage practices can undermine the intended isolation guarantees, leading to practical exploits even when the hardware
security extensions are present~\cite{tzm-sec-iot,silabs-secure-key-storage}.
These findings motivate our use of TrustZone-M as a tightly scoped root of trust: \rtsmarm confines security-critical state (e.g., attestation keys and permutation
metadata) to the Secure world, while treating the Non-Secure RTOS and application stack as fully adversarial.

\section{Realization of SMARM on RTOS-based Systems}
\label{sec:porting}

This section describes our efforts to realize SMARM in RTOS systems.
We begin by discussing the non-trivial challenges that preclude a simple extension of existing SMARM prototype (HYDRA-based one) to RTOS, thereby motivating the need for a new direction in this work.

\subsection{Challenges of Adapting SMARM to RTOS}\label{subsec:naive}

A naive approach is to adapt HYDRA-based prototype of SMARM to operate atop an RTOS.
In principle, this would correspond to replacing the seL4 microkernel with an RTOS kernel while leaving user-level tasks (including the measurement task) unchanged.
However, such an approach raises several security issues.

Recall the HYDRA-based SMARM prototype in Figure~\ref{fig:hydra_impl}.
First, to satisfy (S1)-(S3), it relies on seL4's isolation guarantees to protect the measurement task (that implements the SMARM logic) from other untrusted tasks.
These isolation guarantees are absent in commonly deployed RTOSes.
For example, vanilla FreeRTOS~\cite{freertos}, the most commonly used RTOS in commercial real-time embedded systems, operates under bare-metal settings in which all tasks execute at the same privilege level.
As a result, other (potentially compromised) tasks can freely access all system resources, including $k$ and $\sigma$ used by the measurement task, hence easily breaking (S2) and (S3).
Similarly, the default configuration of NuttX~\cite{nuttx} provides no inter-task isolation.

Even when considering RTOSes that offer some form of memory isolation, e.g., the MPU version of FreeRTOS~\cite{freertos-mpu} or Zephyr OS~\cite{zephyr}, these protections are not backed by formal verification, unlike seL4.
This means their security guarantees only hold as long as the kernel itself remains uncompromised.
In practice, these RTOSes have been shown to contain many kernel-level vulnerabilities, e.g., CVE-2018-16525 that enables remote code execution in FreeRTOS or CVE-2024-6137 that leads to malicious writes to Zephyr's kernel memory~\cite{cve_freertos_rce,cve_zephyr_memory}.
An adversary can exploit such vulnerabilities to break isolation, rendering SMARM insecure on these platforms.

Therefore, simply replacing seL4 with an RTOS of choice is insufficient to make SMARM \emph{secure}.
Instead, our design leverages ARM TrustZone-M to provide hardware-enforced isolation mechanisms available on off-the-shelf embedded platforms.
Next, we describe the system model considered in this work, followed by the corresponding threat model.

\subsection{System \& Threat Model}

\begin{figure}[!htp]
    \centering
    \includegraphics[width=1.0\linewidth]{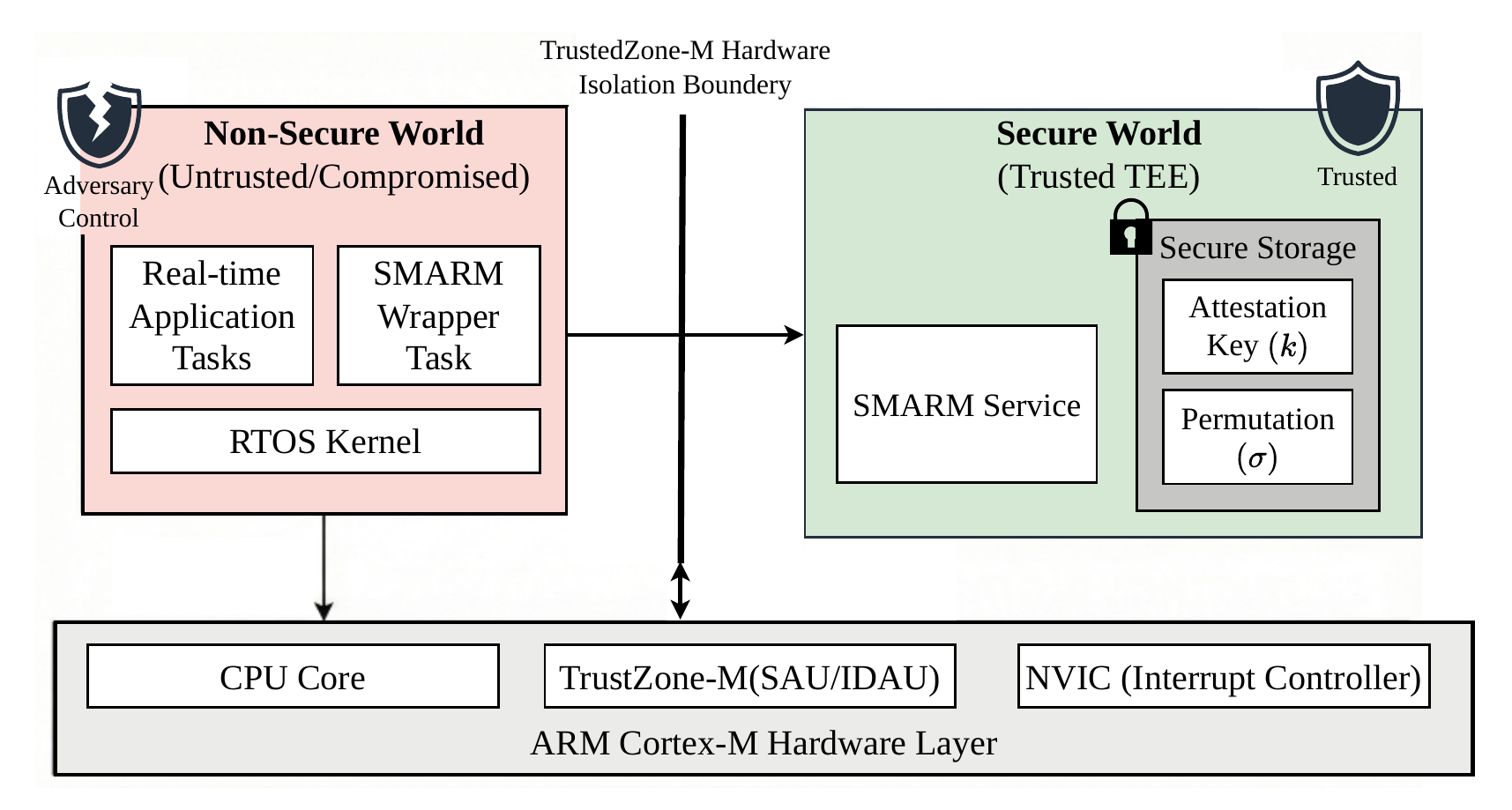}
    \caption{Our proposal of SMARM on RTOS using TrustZone-M}
    \label{fig:system_threat_model}
\end{figure}

We consider a prover implemented on an ARM Cortex-M device equipped with TrustZone-M, which divides execution into two runtime environments: the Secure world (trusted TEE) and Non-Secure worlds (untrusted).
The Non-Secure world hosts untrusted software including the RTOS kernel and real-time application tasks.
In contrast, SMARM's security-critical logic is assumed to be trusted, resides in the Secure world and is accessible from the Non-Secure world only via a NSC call.

We adopt the same threat model as SMARM: an adversary can exploit software vulnerabilities to gain full control over all untrusted software.
Under our system model, this corresponds to the adversary fully compromising the entire Non-Secure world.
Since the Secure world is assumed to be trusted, the adversary cannot tamper with the SMARM logic residing in the Secure world.
In addition, we consider KFV malware that can interrupt Secure-world execution after each block measurement (via NVIC control; see Section~\ref{subsec:tzm}), with the aim of relocating malicious code across memory blocks to evade detection.
Following SMARM, we consider hardware/physical attacks, side-channel attacks and denial-of-service (DoS) attacks to be out of scope.

\subsection{SMARM on RTOS using TrustZone-M}

We propose a new realization of SMARM that coexists with an RTOS and real-time application tasks without requiring custom hardware.
Unlike the naive approach discussed in Section~\ref{subsec:naive}, this design leverages the hardware-enforced isolation provided by TrustZone-M, retaining SMARM's security even if an adversary fully compromises all untrusted software in the Non-Secure world.

\subsubsection{Components}

In addition to the unmodified RTOS kernel and real-time application tasks, we decompose SMARM into two components, as illustrated in Figure~\ref{fig:system_threat_model}.

\textbf{Wrapper Task.}
The Wrapper Task is implemented as a regular RTOS task executing in the Non-Secure world.
It serves as an interface between the verifier (or other tasks responsible for prover-verifier communication) and the Secure-world SMARM logic.
Upon receiving an attestation request, the wrapper task invokes the Secure Service, obtains the attestation evidence, and forwards it to the verifier.
To allow preemption by real-time tasks, we configure the wrapper task to run at the lowest priority.
    
\textbf{Secure Service.}
The Secure Service resides entirely in the Secure world and is exposed to the Non-Secure world via an NSC entry point.
When invoked with an attestation request, it executes the SMARM logic as described in Section~\ref{subsec:smarm} and detailed in Algorithm~\ref{alg:smarm_baseline}, then returning the attestation evidence to the caller.
To enforce atomic block measurements, the service disables interrupts via the \texttt{CPSID} instruction immediately before each block measurement and re-enables them upon completion.
As a result, (S1) is guaranteed in this design by preventing partial execution of the Secure Service and disallowing interruptions during block measurements.
    
To satisfy (S2), we store the attestation key~$k$ persistently in Secure storage that is accessible only from the Secure world.
At runtime, all $k$-related materials reside in Secure RAM, which is inaccessible to the Non-Secure world.
Similarly, to satisfy (S3), the permutation~$\sigma$, once generated, is not exposed to Non-Secure RAM and therefore cannot be inferred by KFV malware.
KFV malware cannot also perform replay attacks since $\sigma$ is based on a one-time challenge $chal$ and a secret $k$.

\begin{algorithm}[t]
\caption{Pseudocode of SMARM Service}
\label{alg:smarm_baseline}
\begin{algorithmic}[1]
\small
\Notation Memory $M$, block size $B$, number of blocks $n$, key $k$, challenge $chal$
\State $\text{seed} \gets \mathsf{Hash}(chal, k)$
\State $\sigma = \mathsf{FisherYates}(\text{seed}, [0,\ldots,n-1])$
\State \Call{HMAC.Init}{$k$}
\For{$i \gets 0$ to $n-1$}
    \State \Call{DisableInterrupts}{}
    \State \Call{HMAC.Update}{hmac, $M_{\sigma[i]}$, $B$}
    \State \Call{EnableInterrupts}{}
\EndFor
\State att-evidence $\gets$ \Call{HMAC.Finalize}{hmac}
\Output att-evidence
\end{algorithmic}
\end{algorithm}

\subsubsection{Security}
\emph{Since the SMARM Service satisfies all three security requirements (S1)–(S3), our new realization of SMARM retains the same security guarantees as the original SMARM design~\cite{smarm}.}
The SMARM Wrapper Task, in contrast, is not part of the trusted computing base (TCB) and serves only a functional role: it enables attestation requests from the untrusted Non-Secure world to invoke the SMARM Service.
Accordingly, this wrapper task can be treated as untrusted and executed entirely in the Non-Secure world.
Indeed, an adversary may compromise this wrapper; 
however, such a compromise does not enable the adversary to forge valid attestation evidences or to gain more information than what is already achievable by KFV malware.
At most, a compromised wrapper task can refuse to call the SMARM Service, resulting in DoS attacks that are outside the scope of this work.
A more formal proof (based on cryptographic reduction) of our realization of SMARM on RTOS and TrustZone-M can be found in Appendix~\ref{sec:formal-security}.



\subsubsection{Workflow}

\begin{figure}[!htp]
\centering
\includegraphics[width=\linewidth]{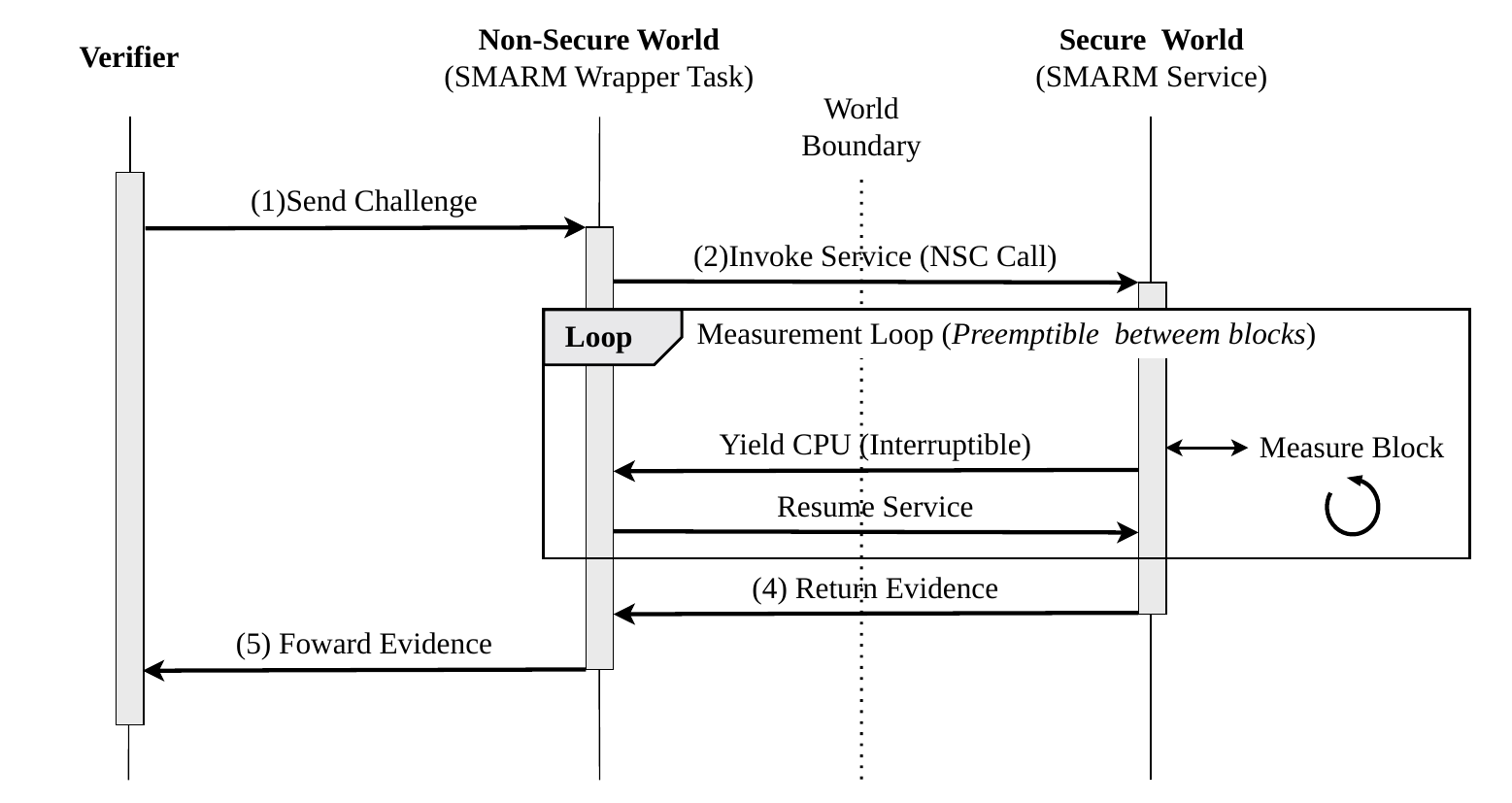}
\caption{The end-to-end workflow of our new SMARM proposal using TrustZone-M}
\label{fig:smarm_workflow}
\end{figure}

To complete this section, we describe the end-to-end interaction between the verifier and the SMARM components on the prover, as illustrated in Figure~\ref{fig:smarm_workflow}:

\begin{enumerate}
    \item The verifier sends a fresh challenge to the prover. The challenge is received by the SMARM Wrapper Task executing in the Non-Secure world; optionally, other tasks may receive the challenge and forward it to the wrapper.
    
    \item Upon receiving the challenge, the SMARM Wrapper Task invokes the SMARM Service through an NSC entry point.
    
    \item The SMARM Service executes the SMARM logic inside the Secure world, following Algorithm~\ref{alg:smarm_baseline}.
    Notably, after each block measurement completes, execution may return to the Non-Secure world, where the RTOS can preempt the SMARM Wrapper Task (which runs at the lowest priority) to schedule real-time tasks.

    \item After completing the attestation, the SMARM Service returns the attestation evidence to the SMARM Wrapper Task.
    
    \item Finally, the SMARM Wrapper Task forwards the evidence to the verifier. If the verifier receives no or invalid evidence, it treats the prover as compromised.
\end{enumerate}

\section{Evaluation of SMARM on RTOS-based Systems}
\label{sec:evaluation}

This section evaluates our TrustZone-M-based SMARM realization on RTOS platforms, focusing on implementation details, non-interruptibility behavior, and real-time availability under different workloads.

\subsection{Implementation}

We implement the new realization of SMARM on an STM32 Nucleo-144 development board~\cite{stm32l552_nucleo144}, which is equipped with an ARM Cortex-M33 MCU running at 110MHz, 512KB of flash storage, and 256KB of SRAM;
it also supports the TrustZone-M security extension.
In our configuration, 256KB of flash and 128KB of SRAM are allocated to the Secure world, with the remaining memory assigned to the Non-Secure world.

On the software side, the Non-Secure world runs an unmodified RTOS, where our prototype supports both FreeRTOS and Zephyr, compiled with size optimizations (-Os).
We implement the SMARM Wrapper Task as a regular RTOS task in the Non-Secure world, consisting of \textit{115} lines of C code.

The Secure-world software implements the SMARM Service as detailed 
in Algorithm~\ref{alg:smarm_baseline}, where we use SHA-256 as the 
hash function for memory block measurement and HMAC-SHA256 to compute 
the attestation evidence. The Secure-world implementation consists of 
\textit{153} lines of C code and contributes \textit{33{,}968}~bytes 
to TCB.

\subsection{Non-interruptibility Period}

\begin{figure}[!htp]
    \centering
    \includegraphics[width=0.9\linewidth]{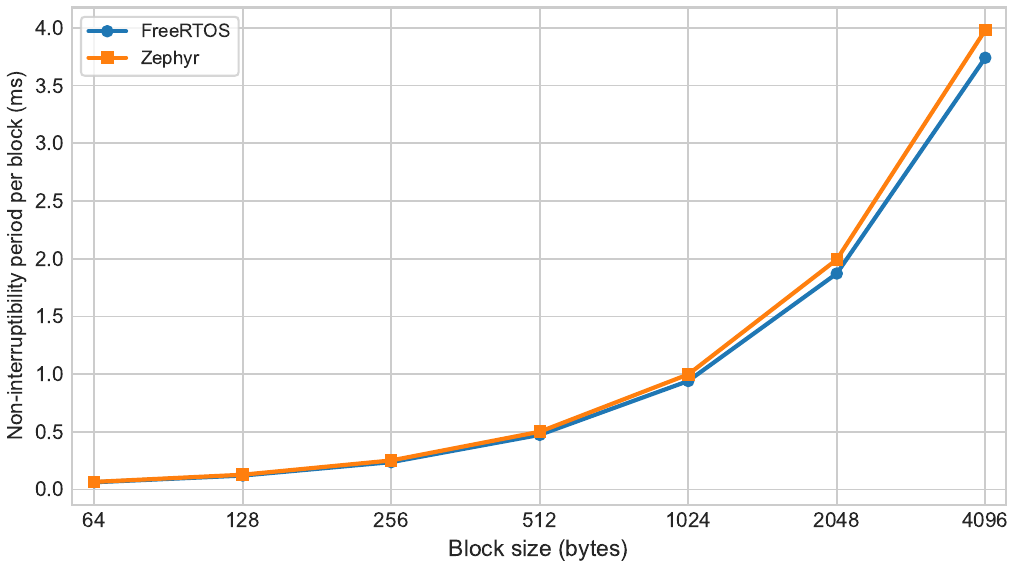}
   \caption{Non-interruptibility period $t_{\text{disabled}}$ of our SMARM proposal on the STM32 Nucleo-144 platform. Each point is averaged over five runs and standard deviations are below 1\%.}
    \label{fig:tdisabled_nucleo}
\end{figure}

We first evaluate SMARM on our STM32 Nucleo-144 prototype, focusing on the non-interruptibility period $t_{\text{disabled}}$, i.e., the time spent in an atomic block measurement.
Since $t_{\text{disabled}}$ depends on the block size $B$, we consider
$B \in \{64, 128, 256, 512, 1024, 2048, 4096\}$~bytes and run SMARM using both FreeRTOS and Zephyr as the underlying RTOS.

Figure~\ref{fig:tdisabled_nucleo} reports the measured $t_{\text{disabled}}$ for each block size.
As expected, $t_{\text{disabled}}$ grows approximately linearly with $B$, confirming that larger blocks lead to longer non-interruptibility periods on our target device.
Together with the storage analysis in Figure~\ref{fig:smarm_tdisabled}, the results show that our realization preserves SMARM's original performance trade-offs~\cite{smarm}.
However, these trends do not directly reveal the impact on real-time availability, which we examine in the next subsection.

\subsection{Real-time Availability: Setup \& Metric}

\paragraph{\textbf{Experimental Setup}}
To evaluate the impact of SMARM on real-time availability in RTOS-based systems, we consider a simplified yet representative scenario of real-time settings where SMARM runs alongside a single real-time task $T_{RT}$.

In particular, we model three classes of practical real-time workloads by varying $T_{RT}$ frequency:
\begin{enumerate}
    \item \textbf{High} real-time workload where $T_{RT}$ runs at 1000Hz
    \item \textbf{Medium} real-time workload -- 100Hz
    \item \textbf{Low} real-time workload -- 10Hz
\end{enumerate}

Although our setup includes only one real-time task, it is intentionally conservative.
Our purpose is to identify the conditions under which SMARM and real-time settings fail to co-exist.
In other words, if this co-existence does not hold in this simplified setting, it is unlikely to do so in more complex deployments with multiple real-time tasks. We measure $f_{\text{observed}}$ using the on-chip hardware cycle counter and convert cycles to time using the 110\,MHz CPU clock.


\paragraph{\textbf{Metric}}
To quantify the impact of SMARM on real-time availability, we 
introduce the \emph{Frequency Accuracy Ratio} (FAR), defined as
\begin{equation}
    FAR = \frac{f_{\text{observed}}}{f_{\text{baseline}}}.
\end{equation}
Here, $f_{\text{observed}}$ denotes the execution frequency of 
the real-time task $T_{RT}$ when SMARM is active (including both 
the SMARM Wrapper Task and SMARM Service).
In contrast, $f_{\text{baseline}}$ denotes the baseline frequency 
of $T_{RT}$ in the absence of SMARM-induced interference, i.e., 
SMARM without disabled interrupts (Lines 5 and 7 in 
Algorithm~\ref{alg:smarm_baseline}).
Intuitively, FAR measures how well the real-time task maintains 
its intended execution rate in the presence of SMARM.
$FAR = 1.0$ means that SMARM introduces no interference due to 
non-interruptible operations.
As these non-interruptible operations increasingly delay task 
execution, the observed frequency of $T_{RT}$ decreases, leading to a lower FAR.~\\


\noindent\textbf{Remark.}
FAR does not replace schedulability analysis.
It measures the frequency-level interference caused by SMARM's non-interruptible block measurements, assuming $T_{RT}$ is already schedulable in the baseline SMARM configuration where interrupts are not disabled.
If this assumption does not hold, the problem is an orthogonal scheduling issue independent of SMARM.

\begin{figure}[!ht]
    \centering
    \begin{subfigure}[t]{0.9\linewidth}
        \centering
        \includegraphics[width=\linewidth]{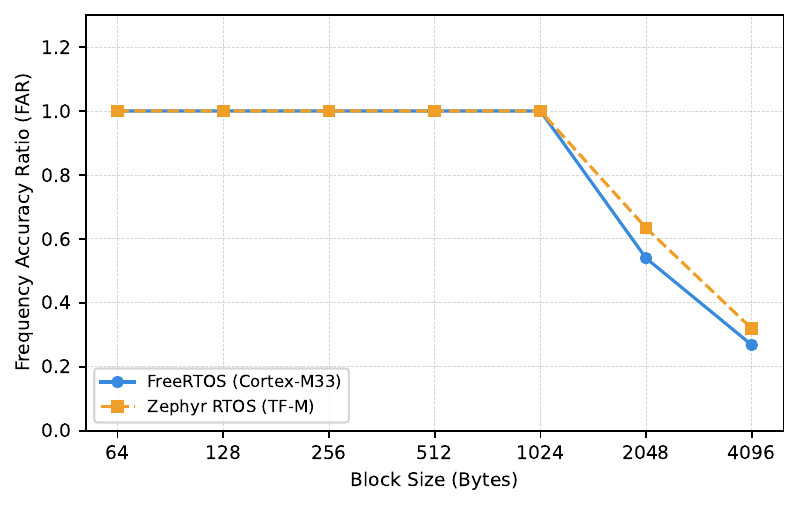}
        \caption{High real-time workload}
        \label{fig:baseline_results_1000hz}
    \end{subfigure}
    \vspace{0.6em}
    \begin{subfigure}[t]{0.9\linewidth}
        \centering
        \includegraphics[width=\linewidth]{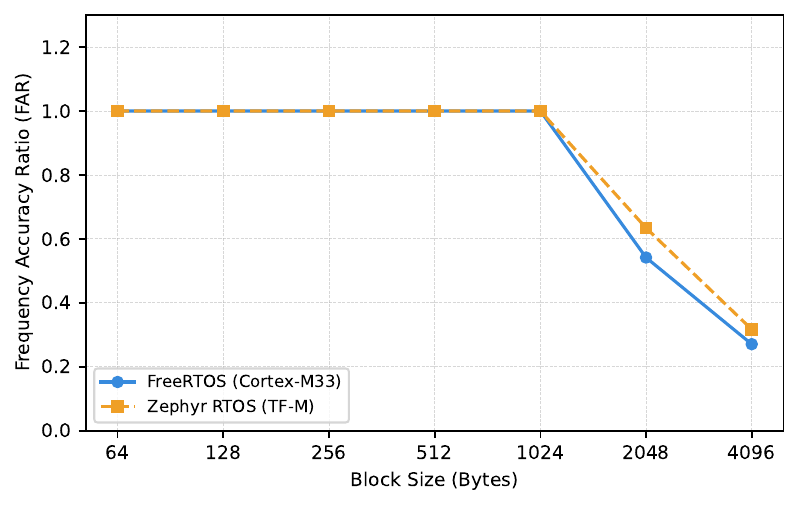}
        \caption{Medium real-time workload}
        \label{fig:baseline_results_100hz}
    \end{subfigure}
    \vspace{0.6em}
    \begin{subfigure}[t]{0.9\linewidth}
        \centering
        \includegraphics[width=\linewidth]{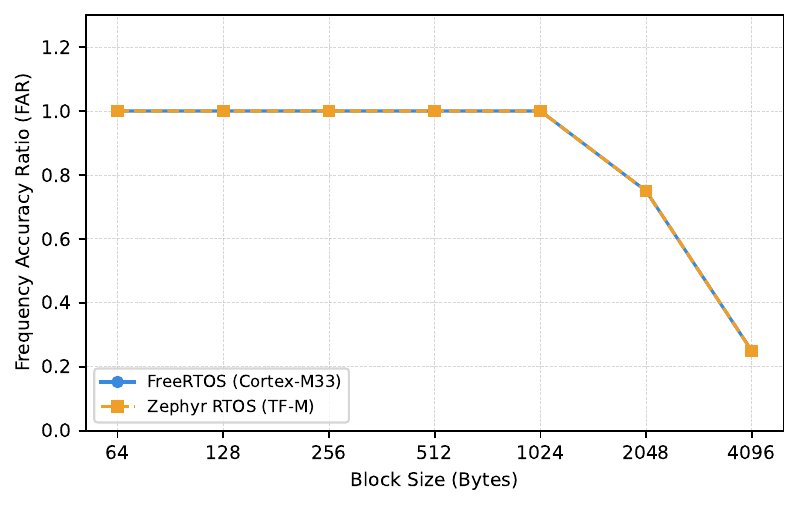}
        \caption{Low real-time workload}
        \label{fig:baseline_results_10hz}
    \end{subfigure}
    \caption{Real-time availability of SMARM under different real-time workloads}
    \label{fig:setup_and_baseline_far}
\end{figure}

\subsection{Real-time Availability: Results}

Figure~\ref{fig:setup_and_baseline_far} reports FAR across different block sizes, real-time workloads, and RTOS choices.
Across all configurations, we observe a consistent trend: FAR remains close to $1.0$ for $B \leq 1024$~bytes, but degrades noticeably as the block size increases.

At $B = 2048$~bytes, FAR decreases to approximately $0.55$ on FreeRTOS and $0.65$ on Zephyr under medium/high real-time workloads, while remaining around $0.75$ under low workloads.
At $B = 4096$~bytes, FAR drops further to $0.27$ on FreeRTOS and $0.32$ on Zephyr across all workload settings.
These results indicate that SMARM cannot reliably preserve real-time availability at these larger block sizes.

Both RTOS-s exhibit nearly identical FAR trends.
This suggests that the dominant source of interference is SMARM’s atomic block measurement rather than RTOS-specific scheduling behavior.
In particular, the block size $B$, which directly determines the non-interruptibility period, emerges as the primary factor affecting real-time availability.

However, these results also reveal a key limitation of SMARM.
To maintain $FAR \approx 1.0$ across all evaluated settings, we must use relatively small block sizes ($B \leq 1024$~bytes).
Under SMARM, smaller blocks imply a larger number of memory partitions and therefore increased secure storage for the permutation $\sigma$, ranging from $576$~bytes at $B = 1024$~bytes to approximately $13$~KB at $B = 64$~bytes.

Such storage demands can become problematic in Secure-world deployments that must coexist with other services, such as confidential DNN inference~\cite{tslices}, control-flow integrity monitoring~\cite{tzmcfi}, or secure key storage~\cite{silabs-secure-key-storage}.
In these scenarios, dedicating between several hundred bytes and tens of kilobytes of Secure RAM solely for storing $\sigma$ may be impractical.

\section{\rtsmarm: Reducing Secure Storage in SMARM}
\label{sec:resmarm-design}

In this section, we propose \rtsmarm, enhanced versions of SMARM that reduce secure storage requirements while preserving its real-time behavior and security guarantees.

\rtsmarm retains the same high-level structure as SMARM, consisting of the Wrapper Task and the Secure Service.
The Wrapper Task remains unchanged, while we redesign the Secure Service to reduce the secure storage overhead.

The key idea of \rtsmarm is to trade secure storage for additional computation, thereby enabling the use of smaller block sizes under tight secure-memory budgets.
We present two variants of \rtsmarm.
The first variant, \rtsmarmone (Section~\ref{sec:rtsmarm-bitarray}), reduces the secure storage requirement from $O(n \log n)$ to $O(n)$.
The second variant, \rtsmarmtwo (Section~\ref{sec:rtsmarm-fpe}), eliminates the need for secure storage altogether, at the cost of significantly longer overall runtime.

\subsection{\rtsmarmone\ -- Reducing Secure Storage to $O(n)$}
\label{sec:rtsmarm-bitarray}

\begin{algorithm}[!t]
\caption{\rtsmarmone (secure world)}
\label{alg:rtsmarm-bitarray}
\begin{algorithmic}[1]
\small
\Notation Memory $M$, block size $B$, number of blocks $n$, key $k$, challenge $chal$
\State \colorbox{codehl}{\textbf{for all} $i \in \{0,\dots,n-1\}$: $\text{used}[i] \gets 0$} 
       \Comment{new: bitmap initialisation}
\State \colorbox{codehl}{$\text{seed} \gets \mathsf{Hash}(chal, k)$}
\State \colorbox{codehl}{$\mathsf{PRNG.Init}(\text{seed})$}
\State \Call{HMAC.Init}{$k$}
\For{$rm \gets 0$ to $n-1$}
    \State \colorbox{codehl}{$\text{rem} \gets n - rm$}
    \State \colorbox{codehl}{$r \gets \mathsf{PRNG.RandInt}(\text{1}, \text{rem})$}
    \State \colorbox{codehl}{$idx \gets \mathsf{FindKthZero}(\text{used}, r)$}
    \State \colorbox{codehl}{$\text{used}[idx] \gets 1$}
    \State \Call{DisableInterrupts}{}
    \State \Call{HMAC.Update}{hmac, $M_{idx}$, $B$}
    \State \Call{EnableInterrupts}{}
\EndFor
\State att\_evidence $\gets$ \Call{HMAC.Finalize}{hmac}
\Output att\_evidence
\end{algorithmic}
\end{algorithm}

Algorithm~\ref{alg:rtsmarm-bitarray} describes the operation of the Secure Service in \rtsmarmone (note that the Wrapper Task remains unchanged).
Instead of storing the full permutation $\sigma$ to determine the measurement order, \rtsmarmone uses a bitmap \texttt{used} of size $n$ to track which memory blocks have already been measured, i.e., \texttt{used}[i] contains a bit indicating whether block $M[i]$ has been measured by \rtsmarmone.

\rtsmarmone first initializes \texttt{used} to all zeros (Line~1), indicating that no blocks have been measured.
It then initializes a PRNG using the challenge $chal$ and key $k$ (Lines~2--3).

For each measurement iteration, \rtsmarmone determines the next block to measure as follows.
It first computes the number of remaining (unmeasured) blocks, denoted as $rem$ (Line~6) and then generates a random integer $r \in [1, rem]$ (Line~7).
Since $r$ does not directly correspond to a valid block index (some blocks may have already been measured), \rtsmarmone scans the \texttt{used} bitmap to locate the $r$-th zero entry and maps it to a block index $idx$ (Line~8).

The selected block is then marked as measured by setting $\texttt{used}[idx] \leftarrow 1$ (Line~9).
Subsequently, $M_{idx}$ is measured following the same way as in SMARM.
This process repeats until all blocks have been covered, after which the attestation evidence is computed and becomes the output of the Secure Service (Line~14).

\subsubsection{Secure Storage vs.\ Runtime Trade-off.}
Compared to SMARM, \rtsmarmone replaces the secure storage of the permutation $\sigma$ with a bitmap \texttt{used}.
Since \texttt{used} contains $n$ entries, it requires $n$ bits of secure storage.
This reduces the storage requirement from $n \cdot \lceil \log_2 n \rceil$ bits in SMARM to $n$ bits in \rtsmarmone.

This reduction in storage comes at the cost of potential additional runtime overhead.
In SMARM, the permutation $\sigma$ is generated once (e.g., via \texttt{FisherYates} in Algorithm~\ref{alg:smarm_baseline}) before the measurement loop, after which each block index can be retrieved in constant time.
In contrast, \rtsmarmone determines the next block indices on-the-fly by scanning the \texttt{used} bitmap to locate the $r$-th unmeasured block.
This requires traversing up to $O(n)$ entries per iteration.

Consequently, while \rtsmarmone reduce secure storage to $O(n)$, it introduces additional per-block computation, which can increase the overall attestation runtime, especially for large $n$.

\subsubsection{Soundness.}\label{subsubsec:v1-sound}
We first establish the soundness of \rtsmarmone, i.e., showing that \rtsmarmone is functionally equivalent to SMARM.
In this context, soundness requires that (1) all memory blocks are measured exactly once, and (2) the order of block selection is randomized.

For (1), in each iteration, the algorithm selects an index $idx$ such that \texttt{used}[idx] = 0, and then sets it to $1$.
Since each iteration flips exactly one bit from $0$ to $1$, and there are $n$ such bits, after $n$ iterations all entries in \texttt{used} are set to $1$.
This means that every block is selected exactly once, and no block is measured more than once.

For (2), randomness follows from the use of PRNG seeded with the secret (random) key $k$.
At each iteration, \rtsmarmone samples a uniform random integer $r \in [1, rem]$.
The mapping from $r$ to the $r$-th zero entry in \texttt{used} establishes a one-to-one correspondence between $r$ and the set of remaining blocks.
Therefore, each unmeasured block is selected with equal probability at each step.
As a result, the overall measurement order is a uniformly random permutation over all $n$ blocks.

\subsubsection{Security}
Here, we provide informal arguments that \rtsmarmone retains the same security as SMARM by fulfilling the same security guarantees (S1)-(S3) defined in Section~\ref{sec:background}; we refer to Appendix~\ref{sec:proof-prng} for a formal (reduction-based) security proof of \rtsmarmone.

\begin{itemize}
  \item[(S1)] \emph{Controlled execution.}
  Recall in Section~\ref{sec:porting} that SMARM guarantees (S1) by (i) relying on TrustZone-M to restrict entry into the Secure world to well-defined entry points (Section~\ref{subsec:tzm}), and (ii) disabling interrupts during each block measurement to ensure atomic execution.
  \rtsmarmone preserves both design choices by implementing its logic as a Secure-world service and maintaining interrupt disablement during block measurement.
  Therefore, (S1) is directly inherited in \rtsmarmone.

  \item[(S2)] \emph{Key secrecy.}
  \rtsmarmone does not modify the generation, storage, or usage of the attestation key $k$.
  As in SMARM, $k$ is stored and accessed exclusively within the Secure world.
  Hence, \rtsmarmone preserves (S2).

  \item[(S3)] \emph{Private measurement order.}
  In \rtsmarmone, the measurement order is implicitly determined by the \texttt{used} bitmap, which resides in Secure RAM inaccessible to the Non-Secure world where KFV malware may reside.
\end{itemize}

\subsection{\rtsmarmtwo\ -- Eliminating Secure Storage}
\label{sec:rtsmarm-fpe}

We observe that SMARM instantiates PRP using the Fisher--Yates shuffle~\cite{fisherYates}, which produces a random permutation over $n$ elements.
While Fisher--Yates is time-efficient (i.e., in $O(n)$), it requires materializing the entire permutation, resulting in $O(n \log n)$ bits of secure storage in SMARM.

In \rtsmarmtwo, we eliminate this storage overhead by instead instantiating the PRP using format-preserving encryption (FPE)~\cite{bellare2009format,ff1}.
FPE is a symmetric encryption primitive that preserves the domain of its inputs, i.e., for a given encryption key from the set $\mathcal{K}$ and a given finite domain $\mathcal{D}$, it defines an encryption as:
\[
\mathsf{FPE.Enc}: \mathcal{K} \times \mathcal{D} \rightarrow \mathcal{D}
\]
By construction, $\mathsf{FPE.Enc}$ realizes a bijection over $\mathcal{D}$, and thus directly serves as a PRP.

\begin{algorithm}[!t]
\caption{\rtsmarmtwo (secure world)}
\label{alg:rtsmarm-fpe}
\begin{algorithmic}[1]
\small
\Notation Memory $M$, block size $B$, number of blocks $n$, key $k$, challenge $chal$
\State \colorbox{codehl}{$(\text{ffx\_key}, \text{tweak}) \gets \mathsf{KDF}(k, chal)$}
\State \colorbox{codehl}{$\mathsf{FFX.Init}(\text{ffx\_key, tweak})$}
\State \Call{HMAC.Init}{$k$}
\For{$i \gets 0$ to $n-1$}
    \State \colorbox{codehl}{$idx \gets \mathsf{FFX.Enc}(i)$}
    \State \Call{DisableInterrupts}{}
    \State \Call{HMAC.Update}{hmac, $M_{idx}$, $B$}
    \State \Call{EnableInterrupts}{}
\EndFor
\State att\_evidence $\gets$ \Call{HMAC.Finalize}{hmac}
\Output att\_evidence
\end{algorithmic}
\end{algorithm}

In \rtsmarmtwo, we instantiate FPE using the FFX construction~\cite{bellare2010ffx}.
FFX implements FPE using a Feistel network~\cite{hoang2010generalized}, where each round function can be realized as a block cipher.
To keep the design lightweight, we use Speck~\cite{speck}, a block cipher optimized for efficient software execution on resource-limited platforms, which aligns with our target deployment setting.

Algorithm~\ref{alg:rtsmarm-fpe} describes how \rtsmarmtwo integrates FFX into the block measurement process.
First, \rtsmarmtwo derives the FFX key and tweak from the attestation key $k$ and the challenge $chal$ (Line~1), and initializes the FFX instance accordingly (Line~2).

During the $i$-th iteration, \rtsmarmtwo determines the next block to measure by using the instantiated FFX to encrypt $i$ (Line~5).
The resulting $idx$ is then used to retrieve and measure the corresponding memory block, following the same procedure as in SMARM.
After all blocks have been measured, \rtsmarmtwo computes and returns the attestation evidence in the same manner as SMARM.

\subsubsection{Secure Storage vs.\ Runtime Trade-off.}
\rtsmarmtwo eliminates the need for any per-block permutation state in secure memory.
This leaves \rtsmarmtwo secure storage with only the FPE encryption key and the HMAC state, both of which are $O(1)$ in size and independent of~$n$.

This elimination of permutation storage comes at the cost 
of additional per-block (non-atomic) computation.
In SMARM, each block index is retrieved in constant time 
from the pre-computed $\sigma$.
In \rtsmarmone, each iteration requires an $O(n)$ scan of 
the \texttt{used} bitmap.
In \rtsmarmtwo, each iteration invokes 
$\mathsf{FFX.Enc}$, which performs one FPE encryption using Speck as the underlying round function. 
While this operation does not depend on $n$, i.e., $O(1)$ runtime complexity, this operation is more complex than the scan operation in \rtsmarmone, potentially making its runtime longer for a smaller $n$.

Consequently, \rtsmarmtwo trades all permutation storage 
for a constant per-block cryptographic cost, making it 
the most storage-efficient variant at the expense of higher per-block computation compared to SMARM.

\subsubsection{Soundness.}
Similar to Section~\ref{subsubsec:v1-sound}, we argue the soundness by showing that in \rtsmarmtwo, (1) all memory blocks are measured exactly once, and (2) the order of block selection is randomized.

\rtsmarmtwo iterates over $i \in \{0, \ldots, n-1\}$ 
and maps each $i$ to a block indices 
$idx = \mathsf{FFX.Enc}(i)$.
Since $\mathsf{FFX.Enc}$ implements a bijection over the domain 
$\{0, \ldots, n-1\}$, each value of $idx$ is distinct for 
distinct values of $i$.
Therefore, every block is measured exactly once and no block 
is measured more than once, fulfilling (1).

(2) directly follows from the fact that FFX instantiates pseudo-random permutation and the order of block selection in \rtsmarmtwo is simply\[
  \{\mathsf{FFX.Enc}(0), \mathsf{FFX.Enc}(1), \ldots,
    \mathsf{FFX.Enc}(n-1)\},
\] making it a uniformly random permutation over all $n$ blocks.

\subsubsection{Security}
We informally argue how \rtsmarmtwo satisfies (S1)-(S3) (in Section~\ref{sec:background}) while providing a formal reduction proof in Appendix~\ref{sec:proof-fpe}.

\begin{itemize}
  \item[(S1)] \emph{Controlled execution.}
  \rtsmarmtwo preserves both design choices of SMARM for 
  enforcing (S1): it implements its logic as a Secure-world 
  service accessible only via TrustZone-M NSC entry points 
  (Section~\ref{subsec:tzm}), and it disables interrupts 
  during each block measurement via the \texttt{CPSID} 
  instruction to ensure atomic execution.
  Therefore, (S1) is directly inherited in \rtsmarmtwo.

  \item[(S2)] \emph{Key secrecy.}
  \rtsmarmtwo does not modify the generation, storage, or 
  usage of the attestation key $k$.
  As in SMARM, $k$ is stored and accessed exclusively within 
  the Secure world and is used only to derive the FFX key and tweak, which also do not leak from the Secure world.
  Hence, \rtsmarmtwo preserves (S2).

  \item[(S3)] \emph{Private measurement order.}
  In \rtsmarmtwo, the measurement order is determined by 
  $\mathsf{FFX.Enc}$, whose output depends on the 
  secret key $k$ and the one-time challenge $chal$.
  Since $k$ resides exclusively in Secure RAM inaccessible to the Non-Secure world, KFV malware cannot compute or predict the permutation.
  %

\end{itemize}

\section{Evaluation of \rtsmarm on RTOS Systems}
\label{sec:rtsmarm-evaluation}

In this section, we evaluate the two \rtsmarm variants against the
baseline SMARM along three dimensions: secure storage 
size, overall attestation runtime, and energy consumption. We reuse the experimental setup and FAR metric
introduced in Section~\ref{sec:evaluation}. 
As established in Section~\ref{sec:evaluation} that FreeRTOS and Zephyr exhibit similar results in SMARM, our experiments here therefore focus only on FreeRTOS.

\subsection{Secure Storage}
\label{sec:storage-runtime-tradeoff}

\begin{figure}[!ht]
    \centering
    \includegraphics[width=0.95\linewidth]{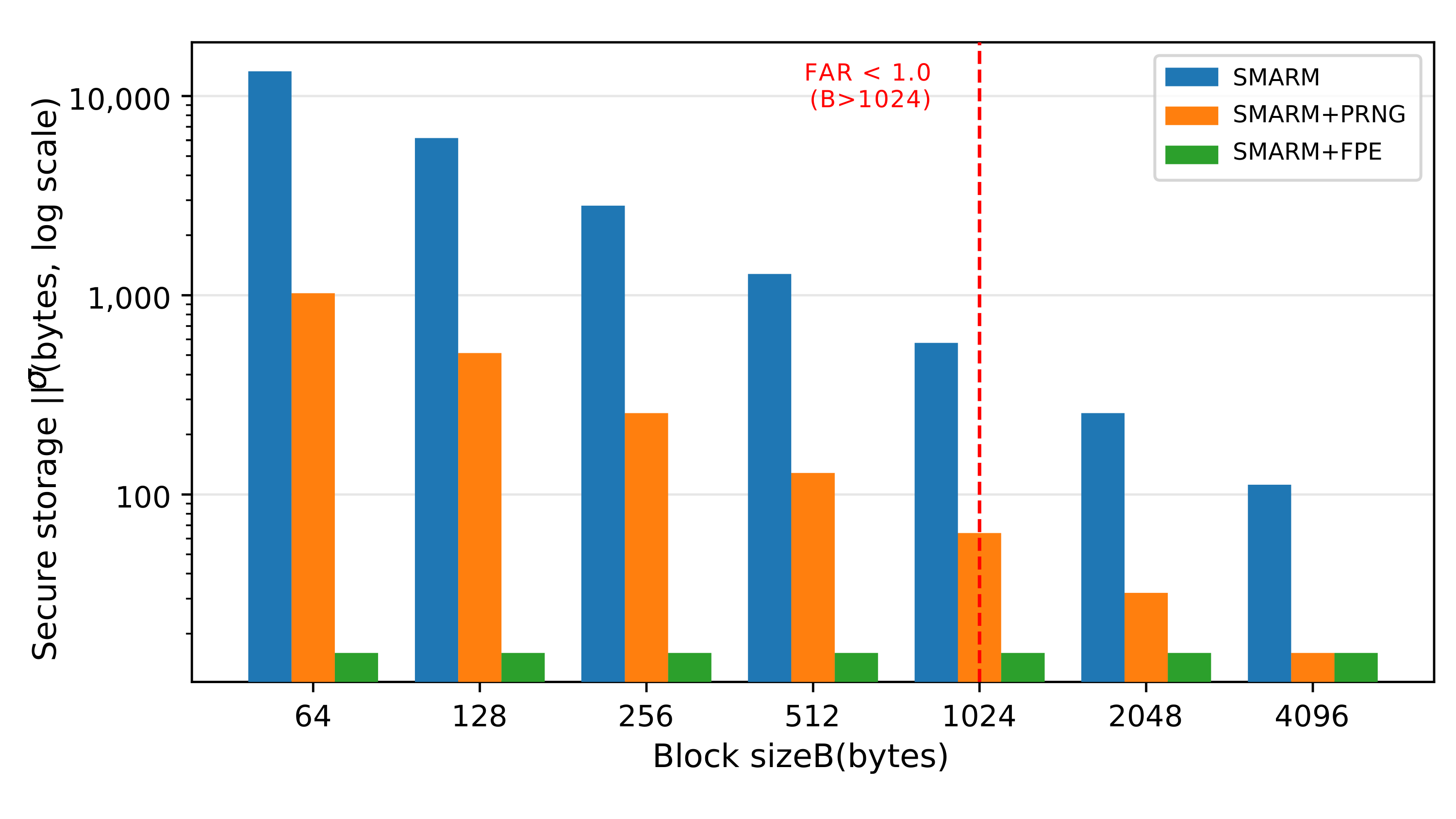}
    \caption{Storage requirement for each scheme with varying block sizes. The dashed red line marks $B = 1024$~bytes, beyond which 
    FAR drops below~1.0.}
    \label{fig:storage_tradeoff}
\end{figure}

\begin{table}[!ht]
  \centering
  \small
  \setlength{\tabcolsep}{6pt}
  \begin{tabular}{rrrr}
    \toprule
    & \multicolumn{3}{c}{\textbf{Runtime (ms)}} \\
    \cmidrule(lr){2-4}
    $B$ \textbf{(bytes)} 
    & \textbf{SMARM} 
    & \textbf{\rtsmarmone} 
    & \textbf{\rtsmarmtwo} \\
    \midrule
    64    & 644 & 9{,}310 & 4{,}229 \\
    128   & 562 & 2{,}730 & 2{,}325 \\
    256   & 521 & 1{,}066 & 1{,}423 \\
    512   & 500 &   630   &   945   \\
    \rowcolor{gray!15}
    1024  & 490 &   526   &   726   \\
    2048  & 484 &   494   &   611   \\
    4096  & 481 &   484   &   553   \\
    \bottomrule
  \end{tabular}
  \caption{Overall attestation runtime of SMARM baseline and the two \rtsmarm\ variants. The shaded row ($B = 1024$~bytes) marks $B=1024$~bytes, beyond which FAR drops below 1.0.}
  \label{tab:runtime_overhead}
\end{table}

Figure~\ref{fig:storage_tradeoff} compares the secure storage requirements of SMARM and both \rtsmarm variants across different block sizes $B$.
Both \rtsmarm schemes substantially reduce secure storage overhead relative to SMARM across all evaluated configurations.

The magnitude of reduction varies across the two variants as $B$ changes.
For \rtsmarmone, the storage benefit is greater when $n$ is large (equivalently, when $B$ is small).
For example, \rtsmarmone achieves a $13\times$ reduction at $B = 64$~bytes ($13{,}312 \rightarrow 1{,}024$~bytes), while the reduction decreases to approximately $7\times$ at $B = 4096$~bytes ($112 \rightarrow 16$~bytes).
This trend arises because both \rtsmarmone and SMARM's storage requirements scale with $n$, albeit with different asymptotic costs.

In contrast, \rtsmarmtwo requires a fixed secure storage footprint of only $16$~bytes for FPE-related key material, independent of $B$/$n$.
As a result, it achieves increasingly larger reductions as $B$ decreases.
At $B = 64$~bytes, \rtsmarmtwo provides an $832\times$ reduction compared to SMARM, far exceeding the reduction achieved by \rtsmarmone.
As $B$ increases, however, the relative benefit becomes narrower, and both variants provide comparable storage savings at large block sizes such as $B = 4096$~bytes.

At the largest block size that still achieves 1.0 FAR ($B = 1024$~bytes), \rtsmarmone reduces the storage requirement from $576$~bytes to $64$~bytes, corresponding to a $9\times$ reduction.
\rtsmarmtwo further lowers the requirement to $16$~bytes, achieving approximately a $36\times$ reduction relative to SMARM.

\begin{figure}[!ht]
    \centering
    \includegraphics[width=0.95\linewidth]{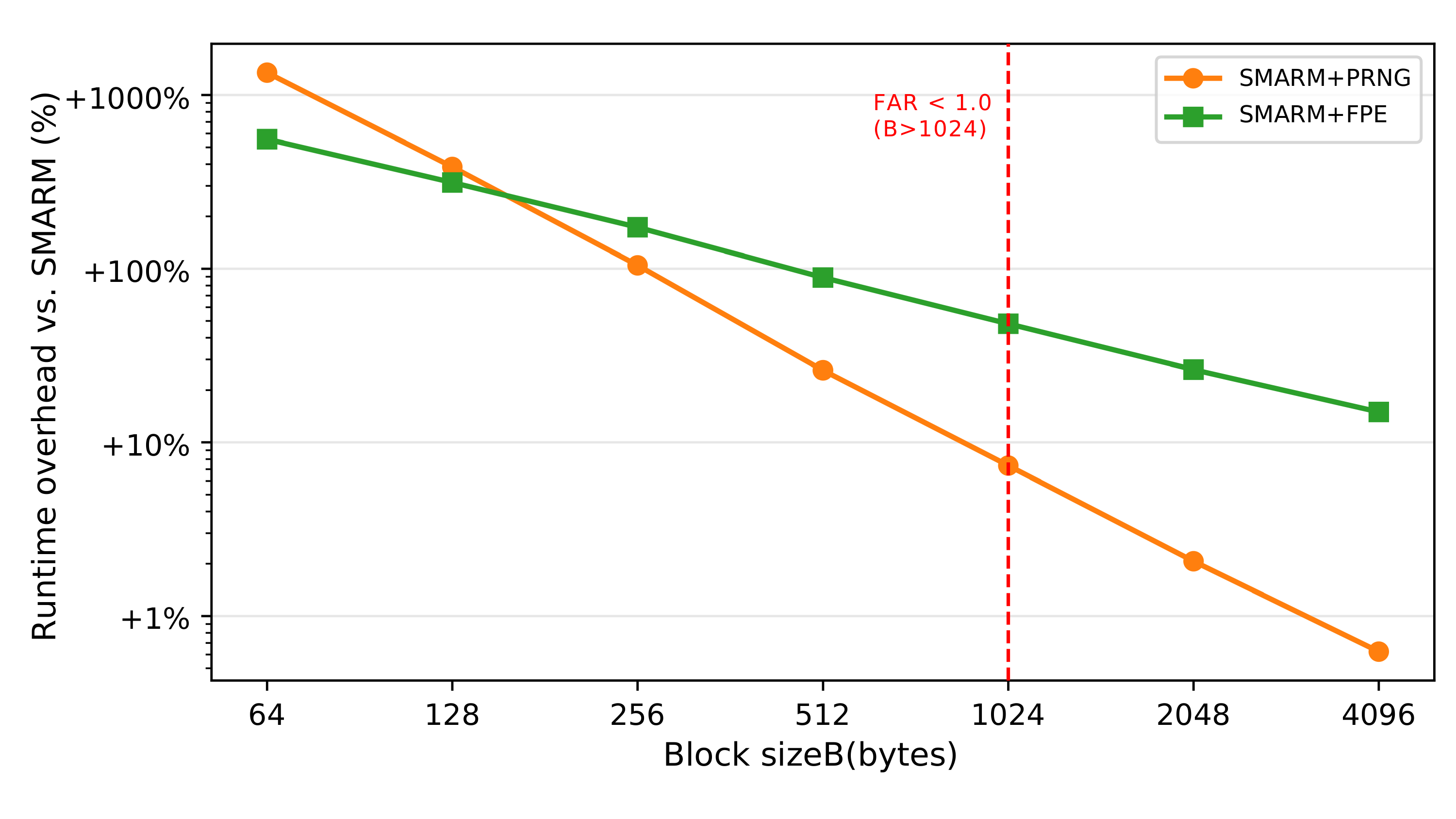}
    \caption{Overall runtime overhead of \rtsmarmone and \rtsmarmtwo relative to SMARM baseline across block sizes. The dashed red line marks $B = 1024$~bytes, beyond which 
    FAR drops below~1.0.}
    \label{fig:runtime_overhead}
\end{figure}

\subsection{Overall Attestation Runtime}
Figure~\ref{fig:runtime_overhead} presents the runtime overhead of \rtsmarmone and \rtsmarmtwo relative to SMARM (when executed without interruption), while the corresponding absolute runtimes are reported in Table~\ref{tab:runtime_overhead}.
The results show that both variants introduce additional runtime compared to SMARM, with the overhead becoming more pronounced as the block size $B$ decreases.

In most cases, \rtsmarmtwo tends to incur higher runtime overhead than \rtsmarmone due to the need to perform an FPE operation for each block selection.
However, for very small block sizes ($B = 64$ or $128$~bytes), \rtsmarmone becomes slower.
This is because its bitmap-based block selection requires scanning $O(n)$ entries per iteration, where $n$ grows inversely with $B$.
As a result, the cumulative scan cost becomes dominant for small blocks, leading to large overhead (e.g., $13.5\times$ of SMARM runtime at $B = 64$~bytes).
This makes \rtsmarmone less suitable for settings requiring very small block sizes.

At $B = 1024$~bytes, \rtsmarmone incurs only a moderate runtime increase of $7\%$ (526~ms vs.\ 490~ms).
For this configuration, the bitmap scan contributes little to the total execution time, which is largely dominated by hashing and HMAC-SHA256 computation over memory blocks.
In comparison, \rtsmarmtwo incurs around $48\%$ overhead (726~ms vs.\ 490~ms).

\subsection{Energy Consumption}
\label{sec:energy}

\begin{table}[!ht]
  \centering
  \small
  \setlength{\tabcolsep}{5pt}
  \resizebox{\linewidth}{!}{%
  \begin{tabular}{rrrr}
    \toprule
    & \multicolumn{3}{c}{\textbf{Board-level Energy (mJ)}} \\
    \cmidrule(lr){2-4}
    $B$ \textbf{(bytes)}
    & \textbf{SMARM}
    & \textbf{\rtsmarmone}
    & \textbf{\rtsmarmtwo} \\
    \midrule
    64    & 190.2 & 2{,}650 & 1{,}243 \\
    128   & 168.0 &    780  &    700  \\
    256   & 155.2 &    310  &    425  \\
    512   & 148.6 &    183  &    289  \\
    \rowcolor{gray!15}
    1024  & 145.2 &    153  &    221  \\
    2048  & 143.3 &    144  &    187  \\
    4096  & 141.0 &    141  &    170  \\
    \bottomrule
  \end{tabular}%
  }
  \caption{Board-level energy consumption per attestation
           round; the shaded row marks $B = 1024$\,bytes,
           the largest block size preserving
           FAR\,$\approx$\,1.0.}
  \label{tab:energy}
\end{table}

We measure board-level energy per attestation round using 
an FNIRSI FNB58 USB power meter in series between the host 
PC and the NUCLEO-L552ZE-Q board, sampling at 100~Hz. 
%
Results are reported in Table~\ref{tab:energy}.
SMARM shows only small energy variation across block sizes $B$, from 190~mJ at $B=64$ to 141~mJ at $B=4096$.
In contrast, both \rtsmarm variants show stronger dependence on~$B$: 
at $B = 64$, \rtsmarmone consumes 2{,}650~mJ 
(13.9$\times$ SMARM) due to its $O(n)$ bitmap scan, 
while \rtsmarmtwo consumes 1{,}243~mJ (6.5$\times$) 
from per-block \textsf{FFX.Enc} calls.
At the operating point $B = 1024$~bytes, overhead 
narrows to $5\%$ for \rtsmarmone (153~mJ) and 
$52\%$ for \rtsmarmtwo (221~mJ).

\subsection{Trade-off}
The results indicate that no single scheme simultaneously minimizes both secure storage and attestation runtime across all block sizes $B$, highlighting an inherent trade-off between these two objectives.

For small block sizes ($B \leq 256$~bytes), SMARM incurs 
substantial storage overhead ($2$-$13$~KB). Although 
\rtsmarmone reduces storage, its bitmap-based selection 
incurs significant runtime and energy overhead at large~$n$: 
at $B = 64$~bytes, \rtsmarmone consumes 2{,}650~mJ 
(13.9$\times$ SMARM). In this range, \rtsmarmtwo is the 
most suitable option, achieving larger storage reduction 
while maintaining lower energy consumption (1{,}243~mJ 
vs.\ 2{,}650~mJ at $B = 64$~bytes).

For medium block sizes (512--1024~bytes), \rtsmarmone 
provides the best balance across all three dimensions. 
At $B = 1024$~bytes, it achieves a $9\times$ storage 
reduction, only $7\%$ runtime overhead (526~ms vs.\ 
490~ms), and only $5\%$ energy overhead (153~mJ vs.\ 
145~mJ) relative to SMARM---making it the most 
energy-efficient \rtsmarm variant at the recommended 
operating point.
\rtsmarmtwo is also viable at this point, offering larger 
storage savings ($36\times$) at the cost of higher runtime 
(726~ms) and energy overhead (221~mJ, $52\%$ above SMARM).

For large block sizes ($B \geq 2048$~bytes), both variants 
have comparable storage, runtime, and energy results. 
However, this regime significantly reduces FAR 
below~1.0 (Section~\ref{sec:evaluation}), making it 
unsuitable for real-time workloads.

\subsection{Discussion}
\label{sec:resmarm-discussion}

Our results suggest that the choice among SMARM, \rtsmarmone, and \rtsmarmtwo depends not only on block size and runtime/energy overhead, but also on the available secure storage size that depends on how Secure RAM is allocated alongside the attestation service.

When Secure RAM is dedicated primarily to attestation, such as in attestation-focused architectures~\cite{smart,hydra}, baseline SMARM is the most attractive option.
In these settings, the permutation array can reside entirely in Secure RAM without competing with other services, making SMARM the simplest and most runtime-efficient design.

When Secure RAM must be shared with additional platform 
secrets~\cite{silabs-secure-key-storage}, \rtsmarmone 
becomes a more practical choice. At $B \geq 512$~bytes, 
it achieves a $9\times$ storage reduction while incurring 
only $5$--$26\%$ runtime and energy overhead relative to 
SMARM, making it the most balanced option for 
resource-constrained deployments.

For example, on the STM32 Nucleo-144 platform, selecting $B \geq 512$~bytes allows the bitmap to remain sufficiently small while preserving space for co-resident secure assets and maintaining low runtime overhead.

In more constrained environments where Secure RAM is heavily contended, \rtsmarmtwo is often preferable.
This scenario arises when the Secure world simultaneously hosts multiple security-critical services, such as confidential deep-learning inference~\cite{tslices} or RTOS-aware control-flow integrity monitoring~\cite{tzmcfi}.
Because \rtsmarmtwo requires only a constant-size storage footprint, it minimizes Secure RAM usage and enables coexistence with other secure services.

Finally, we observe that neither \rtsmarmone nor \rtsmarmtwo changes FAR behavior across different block sizes $B$.
To validate this empirically, we measured FAR for both variants using the same methodology described in Section~\ref{sec:evaluation} under low/medium/high real-time workloads.
The results closely match the baseline SMARM measurements reported in Section~\ref{sec:evaluation}, indicating that the proposed modifications influence storage and runtime overhead but do not alter real-time interference characteristics for a given block size.

\section{Related Work}
\label{sec:related}

\begin{table}[!ht]
  \centering
  \scriptsize
  \setlength{\tabcolsep}{3pt}
  \begin{tabular}{lcccc}
    \toprule
    \textbf{Work} &
    \textbf{RTOS} &
    \textbf{RT eval.} &
    \textbf{No extra HW} &
    \textbf{Secure Storage} \\
    \midrule
    SMARM~\cite{smarm}
        & \texttimes & \texttimes
        & \checkmark & $O(n \log n)$ \\
    PRoM~\cite{prom}
        & \texttimes & \texttimes
        & \checkmark & $O(n \log n)$ \\
    HAtt~\cite{hatt}
        & \texttimes & \texttimes
        & \texttimes & $O(n \log n)$ \\
    ITERATOR~\cite{iterator}
        & \texttimes & \texttimes
        & \checkmark & $O(n)$ \\
    \midrule
    \textbf{\rtsmarmone\ (ours)}
        & \checkmark & \checkmark
        & \checkmark & $O(n)$ \\
    \textbf{\rtsmarmtwo\ (ours)}
        & \checkmark & \checkmark
        & \checkmark & $O(1)$ \\
    \bottomrule
  \end{tabular}
  \caption{Position of our work w.r.t. prior work in shuffled-measurement remote attestation schemes.}
  \label{tab:related}
\end{table}

\subsection{Shuffled Measurement Techniques in Remote Attestation}
\label{sec:related-smarm}

Several works have extended SMARM~\cite{smarm} core idea of measuring memory blocks in a shuffled order, each addressing specific challenges in attestation for IoT devices.

\textbf{PRoM}~\cite{prom} applies shuffled measurement
to multi-core IoT devices. Rather than measuring all
blocks sequentially on a single core, PRoM distributes
the attestation across multiple cores, reducing the
overall attestation time and improving availability
compared to SMARM. However, PRoM must inherit SMARM's need for non-interruptible block measurement. 
PRoM also has not been evaluated under real-time workloads, leaving its impact on time-sensitive tasks unknown.

\textbf{HAtt}~\cite{hatt} improves availability by
using Physical Unclonable Functions (PUFs) to secure
attestation secrets and adopts SMARM's randomized
measurement approach. 
Similar to SMARM, HAtt mandates per-block atomicity.
Furthermore, HAtt introduces a hardware dependency on PUFs that may not be available on standard TrustZone-M MCUs, limiting its deployability on off-the-shelf embedded platforms.

\textbf{ITERATOR}~\cite{iterator} adopts a different strategy.
In an offline phase, it constructs a Cuckoo filter over all memory blocks to represent the valid software state.
During runtime, ITERATOR incrementally verifies memory through two operations: \textsc{Attest}, which validates a newly selected block against the filter, and \textsc{Check}, which re-validates $k$ randomly chosen previously measured blocks to defend against relocation malware.
ITERATOR still requires interrupts to remain disabled during each consecutive \textsc{Attest}-\textsc{Check} sequence.
As such, non-interruptible execution is not eliminated in ITERATOR.

While aforementioned studies build upon SMARM's core idea, they have neither been implemented on RTOS platforms nor evaluated under real-time workloads, leaving their practicality in realistic real-time deployments questionable.
Moreover, they inherit SMARM's secure-storage limitation, which can become prohibitive on memory-constrained devices.
Although ITERATOR reduces storage complexity to $O(n)$ through Cuckoo filters, its actual overhead depends on parameters such as the target false-positive rate and load factor~\cite{fan2014cuckoo}, resulting in a larger constant cost in practice.
In contrast, \rtsmarmone requires exactly $n$ bits of storage with no additional asymptotic constant overhead, while \rtsmarmtwo eliminates storage dependence on $n$ entirely.
Table~\ref{tab:related} summarizes the differences between our work and prior work in shuffled measurements.

\subsection{Other Approaches to Realize Interruptible Attestation}
\label{sec:related-ra}

Beyond shuffled measurements, several alternative approaches have been proposed to relax the atomicity requirement imposed by traditional RA schemes~\cite{ra_survey,hydra,smart,nunes2024toward,sancus2}.

Self-measurement techniques~\cite{erasmus,carpent2018reconciling,iterator} allow the prover to initiate attestation independently of the verifier, enabling measurements to be scheduled at times that minimize interference with real-time tasks.
Other works~\cite{iscflat,asap,pistis} permit interrupts during attestation by additionally measuring interrupt vector tables (IVTs), based on the intuition that malicious interrupt-driven behavior will be reflected in the attestation evidence.

A different line of research relies on hardware support to reduce attestation latency or measurement disruption.
Examples include hardware-assisted memory monitoring or evidence generation~\cite{li2023interruptible,lofat,litehax,atrium}, which can minimize or eliminate non-interruptible execution windows.
Finally, some approaches~\cite{casu,rata} focus on preventing malware persistence by enforcing strict write protection policies that block unauthorized memory modifications.

Complementary to shuffled measurement, recent work shows that long
atomic Secure-world services on TrustZone-M can suppress RTOS SysTick
handling and distort Non-Secure timekeeping~\cite{joianeto2026timekeeping}.
That line of work targets tick reconstruction during atomic secure
execution, whereas our FAR metric quantifies frequency-level interference
from per-block atomic \emph{memory} hashing in SMARM; the two concerns
are orthogonal but both arise when RTOS tasks coexist with TrustZone-M
attestation.

These alternatives are complementary to shuffled measurements and address different aspects of interruptible attestation.
Studying how shuffled measurements can be combined with these techniques to jointly improve security and real-time compatibility is an interesting direction for future work.

\ignore{
Remote attestation for resource-constrained devices has been
studied extensively~\cite{ra_survey}. Early designs such as
SMART~\cite{smart}, HYDRA~\cite{hydra}, and
VRASED~\cite{vrased} require fully atomic measurements with
interrupts disabled for the entire duration, which makes
them incompatible with real-time workloads.
TrustLite~\cite{trustlite}, Sancus~\cite{sancus2}, and
PISTIS~\cite{pistis} follow the same atomic model.
Carpent et al.~\cite{carpent2018reconciling} survey the design
space of real-time-aware attestation and identify three
main directions: memory locking, shuffled measurement, and
self-measurement. Building on this taxonomy, we group recent
work into three directions and discuss how each addresses
the conflict between attestation and real-time tasks.

\paragraph{Self-attestation.}
Instead of waiting for a verifier challenge, the prover
measures its own memory on a local schedule.
RealSWATT~\cite{realswatt} dedicates the second core of a
dual-core ESP32 to attestation, so that the primary core
running real-time tasks is never blocked. It is purely
software-based and requires no trust anchor, but relies on
a multi-core MCU.
ERASMUS~\cite{erasmus} runs self-measurement at scheduled
intervals (regular or irregular) and introduces the Quality
of Attestation (QoA) metric to capture how fresh the latest
measurement is. ERASMUS builds on existing hybrid RA
architectures (e.g., SMART+ on MSP430) and requires no
additional hardware beyond the underlying RA trust anchor.
Both schemes attest memory contents but do not address the
non-interruptibility of an individual measurement, which is
the focus of SMARM-style work.

\paragraph{Interruptible control-flow and execution
attestation.}
A second line of work attests the execution path or
proof-of-execution (PoX) of a program while still allowing
interrupts.
ISC-FLAT~\cite{iscflat} places an interrupt-handling
module in TrustZone-M's Secure World that dispatches
incoming interrupts during control-flow attestation, so
that the control-flow log on a Cortex-M33 is not
invalidated by RTOS-level interrupt handling.
ACFA~\cite{acfa} uses an active root-of-trust on a custom
FPGA. Non-maskable interrupts trigger periodic slicing of
the control-flow log and guarantee that the verifier
eventually receives the report, even if the prover is
compromised.
ASAP~\cite{asap} extends PoX to allow selected trusted
ISRs, linked into the executable region, to run during the
proven execution. This relaxes the strict atomicity
assumption of earlier PoX schemes.
PEARTS~\cite{pearts} is, to our knowledge, the first PoX
system that runs alongside an unmodified FreeRTOS on a
single-core Cortex-M33. It uses an Elastic Secure Region
in the Non-Secure World together with a Transition Log
(\textit{Tlog}) in the Secure World to record every
context switch between the proven function and the rest
of the system, and it preserves execution-integrity
guarantees across RTOS preemptions.
All four schemes attest \emph{control flow} or
\emph{execution} rather than \emph{memory contents}, and
therefore solve a different problem from SMARM-style
attestation.

\paragraph{Hardware-trace-based attestation.}
A third line avoids software instrumentation by adding
hardware that runs in parallel with the CPU.
LO-FAT~\cite{lofat} adds a branch filter, a loop monitor,
and a hash engine to the pipeline of an open-source
RISC-V SoC, so that control-flow attestation runs in
parallel with the CPU without stalling it.
LiteHAX~\cite{litehax} extends this approach to also
detect data-oriented attacks on RISC-V.
ATRIUM~\cite{atrium} attests instructions at the moment
they enter the pipeline, which protects against
physical-access TOCTOU attacks on code memory.
Because these schemes run continuously in parallel with
the CPU, the question of interrupting the measurement does
not arise. However, all three require invasive
modifications to the processor pipeline and have not been
evaluated on commodity ARM Cortex-M MCUs.

\paragraph{Prevention-based approaches.}
Rather than tolerating interruptions during measurement,
some schemes eliminate the need for repeated attestation
altogether. CASU~\cite{casu} enforces runtime code
immutability via write-protected flash and authenticated
updates: since code cannot be silently modified, a single
successful attestation at update time guarantees integrity
until the next update, incurring no ongoing attestation
overhead. RATA~\cite{rata} addresses transient malware
that infects a device and erases itself before the next
attestation round; it uses a hardware-enforced modification
counter to detect any code change between rounds, substantially
reducing per-round measurement cost. These schemes are
complementary to ours: they prevent or detect code injection
but do not generate fresh, challenge-responsive
cryptographic evidence of memory-content integrity on demand,
as SMARM-style attestation does.

\paragraph{Positioning of \rtsmarm.}
\rtsmarm targets a different point in this design space.
ISC-FLAT, ACFA, ASAP, and PEARTS attest control flow or
execution; we attest memory contents in the SMARM family.
RealSWATT requires a multi-core MCU; we target single-core
TrustZone-M MCUs, which are common in the real-time IoT
segment. Hardware-trace-based schemes need custom SoCs;
we use only the TrustZone-M extension that is already
available on commodity Cortex-M parts. Moreover, unlike all prior shuffled-measurement schemes that require $O(n \log n)$ bits of secure storage for the explicit permutation, \rtsmarmtwo achieves $O(1)$ secure
storage via format-preserving encryption, breaking the
storage--availability deadlock that affects SMARM, PRoM,
and HAtt on memory-constrained TrustZone-M MCUs.
}

\section{Conclusion}
\label{sec:conclusion}

This paper presented the first systematic study of shuffled-memory remote attestation in real-time RTOS-based IoT systems. We realized SMARM on commodity TrustZone-M hardware running FreeRTOS and Zephyr, replacing its original microkernel-based design while preserving the same security guarantees.

Our evaluation revealed a fundamental limitation of SMARM: small block sizes preserve real-time availability but require substantial secure storage, restricting deployment on memory-constrained devices. To address 
this limitation, we proposed \rtsmarm with two variants that reduce secure-storage requirements through different design trade-offs. Experimental results show that both variants significantly lower secure-storage overhead while preserving SMARM's real-time compatibility, at 
the cost of additional attestation runtime and energy overhead. At the recommended operating point ($B = 1024$~bytes), \rtsmarmone incurs only $7\%$ runtime and $5\%$ energy overhead over SMARM, while \rtsmarmtwo offers larger storage savings at the cost of $48\%$ runtime and $52\%$ energy overhead.Overall, \rtsmarm provides a practical foundation for 
deploying shuffled remote attestation on real-time platforms with limited secure storage.







\appendices

\section{Formal Security Analysis}
\label{sec:formal-security}
 
This section provides a formal security analysis of our TrustZone-M-based SMARM realization in Section~\ref{sec:porting} and the two SMARM+ variants in Section~\ref{sec:resmarm-design}.
Our goal is to show that each construction satisfies the three security requirements (S1)-(S3) defined in Section~\ref{sec:background} under standard cryptographic and hardware assumptions.
 
We refer to $\lambda$ as the security parameter, $\mathit{negl}(\lambda)$ for
any negligible function, and PPT for probabilistic polynomial-time.
All algorithms receive $1^{\lambda}$ as an implicit input.
We denote by $[n]$ the set $\{0,\dots,n{-}1\}$ of memory block indices.
 
\subsection{Preliminaries}
\label{sec:prelim}
 
\begin{definition}[Pseudorandom Function (PRF)]
\label{def:prf}
A keyed function $F : \{0,1\}^{\lambda} \times \{0,1\}^{*} \to \{0,1\}^{\lambda}$
is a \emph{pseudorandom function} (PRF) if for every PPT distinguisher
$\mathcal{D}$,
\[
\begin{aligned}
  \Adv^{\mathrm{PRF}}_{\mathcal{D}}(\lambda)
  &=
  \left|
    \Pr\!\left[
      \mathcal{D}^{F(k,\cdot)}(1^{\lambda}) = 1
    \right]
    -
    \Pr\!\left[
      \mathcal{D}^{R(\cdot)}(1^{\lambda}) = 1
    \right]
  \right| \\
  &\leq \mathit{negl}(\lambda)
\end{aligned}
\]
where $k \xleftarrow{\$} \{0,1\}^{\lambda}$ and $R$ is a uniformly random
function.
\end{definition}
 
\begin{definition}[Pseudorandom Generator (PRNG)]
\label{def:prng}
A deterministic function $G : \{0,1\}^{\lambda} \to \{0,1\}^{\ell(\lambda)}$
with $\ell(\lambda) > \lambda$ is a \emph{pseudorandom generator} if for
every PPT distinguisher $\mathcal{D}$,
\[
\begin{aligned}
  \Adv^{\mathrm{PRNG}}_{\mathcal{D}}(\lambda)
  &=
  \left|
    \Pr\!\left[\mathcal{D}(G(s)) = 1\right]
    -
    \Pr\!\left[\mathcal{D}(r) = 1\right]
  \right| \\
  &\leq \mathit{negl}(\lambda)
\end{aligned}
\]
where $s \xleftarrow{\$} \{0,1\}^{\lambda}$ and
$r \xleftarrow{\$} \{0,1\}^{\ell(\lambda)}$.
\end{definition}
 
\begin{definition}[Pseudorandom Permutation (PRP)]
\label{def:prp}
A keyed family of bijections
$E : \{0,1\}^{\lambda} \times [n] \to [n]$
is a \emph{pseudorandom permutation} over domain $[n]$ if for every
PPT distinguisher $\mathcal{D}$,
\[
\begin{aligned}
  \Adv^{\mathrm{PRP}}_{\mathcal{D}}(\lambda)
  &=
  \left|
    \Pr\!\left[\mathcal{D}^{E(k,\cdot)}(1^{\lambda}) = 1\right]
    -
    \Pr\!\left[\mathcal{D}^{\pi(\cdot)}(1^{\lambda}) = 1\right]
  \right|\\
  &\leq \mathit{negl}(\lambda)
\end{aligned}
\]
where $k \xleftarrow{\$} \{0,1\}^{\lambda}$ and $\pi$ is a uniformly random
permutation over $[n]$.
\end{definition}
 
Our proofs rely on the following standard cryptographic assumptions:
 
\begin{assumption}[SHA-256 as PRF]
\label{asm:sha256-prf}
Keyed hash is a PRF (Definition~\ref{def:prf})
when keyed over $\{0,1\}^{256}$.
This is the standard assumption~\cite{bellare1996new} and is
widely accepted for SHA-256 as the underlying cryptographic hash function.
\end{assumption}
 
\begin{assumption}[PRNG security]
\label{asm:prng}
The PRNG used in \rtsmarmone, seeded with a uniformly random value, produces
output that is computationally indistinguishable from a uniform random string
(Definition~\ref{def:prng}).
\end{assumption}
 
\begin{assumption}[FFX-Speck as PRP]
\label{asm:ffx-prp}
The FFX construction~\cite{bellare2010ffx} instantiated with the Speck block
cipher~\cite{speck} is a PRP (Definition~\ref{def:prp}) over
domain $[n]$, provided that no adversary is given access to an
encryption or decryption oracle for the underlying FFX instance.
\end{assumption}

We also assume the following assumptions about TrustZone-M.
 
\begin{assumption}[TrustZone-M Isolation]
\label{asm:tz-isolation}
No code executing in the Non-Secure world can read or write memory
regions assigned to the Secure world by the SAU/IDAU configuration,
even under full-software compromise of Non-Secure world.
\end{assumption}
 
\begin{assumption}[TrustZone-M Entry Integrity]
\label{asm:tz-entry}
Secure-world functions are invocable from the Non-Secure world only via
explicitly designated NSC entry points.
Mid-function entry and early exit are prevented by the secure state transition mechanism enforced in hardware.
\end{assumption}

\subsection{Security Game}
\label{sec:game}
 
We define a unified \emph{measurement-order indistinguishability} game that
captures S3 for all three constructions.
S1 and S2 are architectural properties enforced by TrustZone-M hardware
and are addressed separately in Section~\ref{sec:tzm-security}.
 
\begin{definition}[Measurement-Order Indistinguishability Game
  $\mathsf{MOI}^{\Pi}_{\mathcal{A}}(\lambda)$]
\label{def:moi-game}
Let $\Pi$ be a memory measurement scheme over $n$ blocks.
The game $\mathsf{MOI}^{\Pi}_{\mathcal{A}}(\lambda)$ between a challenger
$\mathcal{C}$ and a PPT adversary $\mathcal{A}$ proceeds as follows:
\begin{enumerate}
  \item \textbf{Setup.}
    $\mathcal{C}$ samples a secret key
    $k \xleftarrow{\$} \{0,1\}^{\lambda}$ and initializes $\Pi$ with $k$.
  \item \textbf{Challenge.}
    The verifier sends a fresh challenge
    $\mathit{chal} \xleftarrow{\$} \{0,1\}^{\lambda}$;
    $\mathcal{C}$ makes $\mathit{chal}$ available to $\mathcal{A}$.
  \item \textbf{Adversary capabilities.}
    $\mathcal{A}$ controls all Non-Secure-world software.
    Concretely, $\mathcal{A}$:
    \begin{itemize}
      \item knows $n$, $\mathit{chal}$, and the complete specification of $\Pi$;
      \item observes the wall-clock timestamps $t_1,\dots,t_n$ at which interrupts are re-enabled after each block measurement (KFV capability~\cite{smarm});
      \item may read or write any Non-Secure RAM address at any time.
    \end{itemize}
  \item \textbf{Output.}
    $\mathcal{A}$ outputs a predicted permutation
    $\hat{\sigma} \in \mathrm{Perm}([n])$.
  \item \textbf{Win condition.}
    Let $\sigma^{\Pi}$ be the actual measurement order produced by $\Pi$
    on inputs $(k, \mathit{chal})$.
    $\mathcal{A}$ wins if $\hat{\sigma} = \sigma^{\Pi}$.
\end{enumerate}
The advantage of $\mathcal{A}$ against $\Pi$ is
\[
  \Adv^{\mathrm{MOI}}_{\mathcal{A},\Pi}(\lambda)
  \;=\; \Pr\!\left[\mathcal{A}\ \text{wins}\ \mathsf{MOI}^{\Pi}_{\mathcal{A}}(\lambda)\right]
  - \frac{1}{n!},
\]
where $\frac{1}{n!}$ is the probability of a correct random guess over
$\mathrm{Perm}([n])$.
\end{definition}
 
We say $\Pi$ satisfies \emph{measurement-order indistinguishability} if
$\Adv^{\mathrm{MOI}}_{\mathcal{A},\Pi}(\lambda) \leq \mathit{negl}(\lambda)$
for every PPT $\mathcal{A}$.
 
\subsection{Security of the TrustZone-M-based \smarm}
\label{sec:tzm-security}

\begin{theorem}[Security of RTOS-based SMARM]
\label{thm:tzm}
The TrustZone-M-based SMARM realization (SMARM-TZM) described in Section~\ref{sec:porting} satisfies: (S1) Controlled Execution,
(S2) Key Secrecy, and (S3) Private Measurement Order.
\end{theorem}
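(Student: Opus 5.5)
The proof treats the three properties separately. (S1) and (S2) follow directly from the TrustZone-M hardware assumptions. (S3) is a cryptographic statement about the $\mathsf{MOI}$ game (Definition~\ref{def:moi-game}), proved by a short game-hopping reduction.

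\textbf{(S1).} We invoke Assumption~\ref{asm:tz-entry}. The SMARM Service of Algorithm~\ref{alg:smarm_baseline} is exposed only through its NSC entry point. Hence any Non-Secure caller begins execution at Line~1, and the only way to leave the Service is to return att-evidence. Interrupts are a separate matter: they are masked by \texttt{CPSID} between Lines~5 and~7. The NVIC therefore cannot deliver a Non-Secure handler inside a block measurement, and preemption can occur only between consecutive iterations. Note that interrupt masking is itself a Secure-world register state that Non-Secure code cannot alter, by Assumption~\ref{asm:tz-isolation}.

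\textbf{(S2).} By construction, $k$ lives only in Secure flash and Secure RAM, and every value that depends on $k$ does too: the seed, $\sigma$, and the HMAC state. Assumption~\ref{asm:tz-isolation} then says no Non-Secure code can read any of these. The only $k$-dependent value that leaves the Secure world is the final HMAC output. HMAC-SHA256 is a PRF under Assumption~\ref{asm:sha256-prf}, so this output reveals nothing useful about $k$ beyond negligible advantage. This can be stated as a one-line reduction: a key-recovery adversary would yield a PRF distinguisher.

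\textbf{(S3).} Here we bound $\Adv^{\mathrm{MOI}}_{\mathcal{A},\text{SMARM-TZM}}(\lambda)$ with three games.
\begin{itemize}
\item[] \emph{Game~0} is the real $\mathsf{MOI}$ game. In it, $\sigma=\mathsf{FisherYates}(\mathsf{Hash}(chal,k),[n])$ is stored in Secure RAM.
\item[] \emph{Game~1} first uses Assumption~\ref{asm:tz-isolation} to argue that the adversary's view is a function only of $chal$, the public code, Non-Secure memory, and the timestamps $t_1,\dots,t_n$.
\end{itemize}
The key observation about the timestamps is that every block has the same size $B$, and the SHA-256 update runs in constant time, independent of its data. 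So the $t_i$ do not depend on $\sigma$, and $\mathcal{A}$ can simulate them itself. This formalizes the KFV capability: $\mathcal{A}$ learns \emph{how many} blocks have been measured, but not \emph{which}.

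\emph{Game~2} replaces the seed $\mathsf{Hash}(chal,k)$ with a truly random seed. A distinguisher between Game~1 and Game~2 yields a PRF distinguisher, by Assumption~\ref{asm:sha256-prf}, with one query on $chal$. Freshness of $chal$ ensures the seed is never reused across sessions.

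\emph{Game~3} replaces the PRNG stream that drives Fisher--Yates with uniform randomness, losing at most $\Adv^{\mathrm{PRNG}}$ by Assumption~\ref{asm:prng}. With uniform coins, Fisher--Yates outputs an exactly uniform permutation (up to negligible rejection-sampling bias). Since $\mathcal{A}$'s view is independent of $\sigma$, its win probability is exactly $1/n!$. Summing the losses gives a negligible advantage.

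The main obstacle I expect is the Game~0$\to$1 step. I need to argue rigorously that nothing $\sigma$-dependent leaks to the Non-Secure world. Timing is the hard part: the timestamps are explicitly part of the adversary's view, even though side channels are out of scope. Another possible leak is bus or cache behaviour on memory reads. I will handle this by appealing to the threat model, which excludes side channels, together with the equal-size, data-independent hashing of each block. I will state that explicitly as the sole extra hypothesis.
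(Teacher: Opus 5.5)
Your (S1) and (S2) arguments essentially match the paper's: NSC entry integrity plus per-block masking, and Secure-only residence of $k$ and everything derived from it. Your PRF remark about the released HMAC value is an addition the paper does not make.

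For (S3) you take a genuinely different route. The paper argues in one step: $\sigma$ cannot be read by Assumption~\ref{asm:tz-isolation}, and an adversary predicting $\sigma$ ``could recover the seed,'' contradicting pseudorandomness of $\mathsf{Hash}(chal,k)$. You instead characterize $\mathcal{A}$'s view, show the timestamps are simulatable, and hop seed $\to$ uniform (PRF) and PRNG stream $\to$ uniform coins (Assumption~\ref{asm:prng}), ending with an exactly uniform Fisher--Yates output. This is the template the paper itself uses only later, for \rtsmarmone in Theorem~\ref{thm:prng}.

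Your route buys real rigor, for three reasons:
\begin{itemize}
\item Predicting a deterministic function of the seed does not imply recovering the seed.
\item A uniform 256-bit seed says nothing about $\sigma$ being near-uniform unless the PRNG feeding Fisher--Yates is itself secure; that is your Game~3.
\item The paper never accounts for the timestamps $t_1,\dots,t_n$ that its own $\mathsf{MOI}$ game hands to $\mathcal{A}$.
\end{itemize}
The cost is an extra hypothesis, data-independent per-block timing, which you correctly flag.

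Two steps still need repair; the paper's proof has the same omissions.

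\textbf{The attestation evidence.} As you note under (S2), it leaves the Secure world: it is returned to the Non-Secure Wrapper Task and so is in $\mathcal{A}$'s view. Yet it is an HMAC over the blocks taken in $\sigma$-order, hence $\sigma$-dependent. So ``$\mathcal{A}$'s view is independent of $\sigma$'' in Game~3 is false as stated. Worse, your Game~1$\to$2 reduction, which makes a single oracle query on $chal$, cannot even produce the evidence: that requires $k$, which also keys the seed derivation. The fix is to treat $\mathsf{Hash}(chal,k)$ and $\mathrm{HMAC}(k,\cdot)$ as one PRF on domain-separated inputs (or derive independent subkeys from $k$) and replace both outputs in the same hop. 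The evidence then becomes a uniform string independent of $\sigma$.

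\textbf{Register-level assumptions.} Assumption~\ref{asm:tz-isolation} covers only SAU/IDAU memory attribution. The register-level facts you rely on are separate ARMv8-M guarantees and should be stated as assumptions:
\begin{itemize}
\item the Secure interrupt mask in (S1);
\item NMI staying Secure-targeted, since \texttt{CPSID} does not mask it;
\item for your Game~1 view characterization, hardware stacking and clearing of registers when a Non-Secure exception preempts Secure code. This is what keeps $\sigma(i)$, $i$ and the block pointer hidden at each preemption.
\end{itemize}
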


\begin{proof}

\textbf{(S1) Controlled Execution.}
The SMARM Service is implemented entirely in the Secure world and is accessible from the Non-Secure world only through a designated NSC entry point.
By Assumption~\ref{asm:tz-entry}, the hardware prevents both mid-code entry into the Secure world and premature return to the Non-Secure world.
The adversary can influence Secure-world execution only when interrupts are enabled after each block measurement, i.e., Line~7 of Algorithm~\ref{alg:smarm_baseline}.
At that point, an interrupt may temporarily transfer control away from the SMARM Service, but upon resumption execution returns to the same interrupted location; it cannot redirect control to any other Secure-world code location.
Moreover, interrupts are disabled during each block measurement, so the adversary cannot interrupt the execution of an individual block.
Therefore, Secure-world execution is confined to the intended SMARM control flow, satisfying~(S1).
 
\medskip
\noindent\textbf{(S2) Key Secrecy.}
The attestation key $k$ is provisioned once and stored in Secure flash,
which the SAU/IDAU configuration marks as Secure-only.
At runtime, all $k$-derived material resides exclusively in Secure RAM.
By Assumption~\ref{asm:tz-isolation}, no Non-Secure code, including a
fully compromised RTOS or any application task, can read or write these locations.
Hence, $k$ is never exposed to the adversary, satisfying (S2).
 
\medskip
\noindent\textbf{(S3) Private Measurement Order.}
We show that no PPT adversary $\mathcal{A}$ can predict the permutation
$\sigma$ used in a given attestation instance with non-negligible advantage.
 
Recall $\sigma$ is generated by $\mathrm{FisherYates}(\mathit{seed},\, [0,\dots,n-1])$, where $\mathit{seed} = \mathrm{Hash}(\mathit{chal},\,k)$.
After generation, $\sigma$ is stored in Secure RAM and never written to
Non-Secure RAM.
By Assumption~\ref{asm:tz-isolation}, $\mathcal{A}$ cannot read $\sigma$ directly.
 
We argue that $\mathcal{A}$ cannot compute $\sigma$ from $\mathit{chal}$
alone.
Suppose, for contradiction, that $\mathcal{A}$ could predict $\sigma$ with
non-negligible advantage.
Since $\sigma$ is a deterministic function of $\mathit{seed}$, this implies
$\mathcal{A}$ can recover $\mathit{seed}$.
But $\mathit{seed} = \mathrm{Hash}(\mathit{chal},\,k)$, and by
Assumption~\ref{asm:sha256-prf} (SHA-256 as PRF), an adversary that knows
$\mathit{chal}$ but not $k$ cannot distinguish $\mathit{seed}$ from a
uniformly random string with non-negligible probability.
This contradicts our supposition; hence
$\Adv^{\mathrm{MOI}}_{\mathcal{A}, \text{SMARM-TZM}}(\lambda) \leq \mathit{negl}(\lambda)$,
satisfying (S3).
 
\end{proof}
 
\subsection{Security of \rtsmarmone}
\label{sec:proof-prng}
 
\begin{theorem}[\rtsmarmone Security]
\label{thm:prng}
SMARM+PRNG satisfies (S1)-(S3).
Specifically, let $\Pi$ be \rtsmarmone. For every PPT adversary $\mathcal{A}$,
\[
  \Adv^{\mathrm{MOI}}_{\mathcal{A},\Pi}(\lambda) \;\leq\; \mathit{negl}(\lambda),
\]
\end{theorem}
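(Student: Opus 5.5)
We follow the structure of the proof of Theorem~\ref{thm:tzm} and handle (S1), (S2), and (S3) separately. The first two are architectural and carry over almost verbatim. For (S1), the \rtsmarmone Service is a Secure-world function reachable only via an NSC entry point, so Assumption~\ref{asm:tz-entry} rules out mid-code entry and early exit. The only points at which control can leave are after Line~12 of Algorithm~\ref{alg:rtsmarm-bitarray}, where interrupts are re-enabled, and resumption returns to the same location. The added lines (bitmap update, PRNG call, \textsf{FindKthZero}) execute in the Secure world, so they give the adversary no new entry point. For (S2), $k$ is used only for $\mathsf{Hash}(chal,k)$ and HMAC, both computed in Secure RAM, and Assumption~\ref{asm:tz-isolation} gives secrecy. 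We would also note that the bitmap \texttt{used} and the PRNG state reside in Secure RAM, so they too are unreadable by $\mathcal{A}$.

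For (S3) we use a hybrid argument over the $\mathsf{MOI}$ game of Definition~\ref{def:moi-game}.
\begin{itemize}
\item $H_0$: the real game, with $seed=\mathsf{Hash}(chal,k)$ and the PRNG stream $G(seed)$ driving the $r$ values.
\item $H_1$: $seed$ is replaced by a uniform string. $|\Pr[H_0]-\Pr[H_1]|\le \Adv^{\mathrm{PRF}}$ by Assumption~\ref{asm:sha256-prf}. The reduction $\mathcal{B}$ queries its oracle on $chal$, simulates all Non-Secure observations (the timestamps $t_1,\dots,t_n$ and the Non-Secure RAM), and checks $\mathcal{A}$'s guess against the order it computed itself.
\item $H_2$: $G(seed)$ is replaced by a truly uniform string. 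The gap is bounded by $\Adv^{\mathrm{PRNG}}$ via Assumption~\ref{asm:prng}; the reduction derives the $r$ values and the permutation from its challenge string.
\end{itemize}

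In $H_2$ we invoke the soundness argument of Section~\ref{subsubsec:v1-sound}. Each $r$ is uniform on $[1,rem]$, and $\mathsf{FindKthZero}$ is a bijection onto the unmeasured blocks. Hence $\sigma^\Pi$ is a uniform permutation, chosen independently of everything $\mathcal{A}$ sees.

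The step we expect to be the main obstacle is arguing that $\mathcal{A}$'s view in $H_2$ is independent of $\sigma^\Pi$. $\mathcal{A}$ observes the KFV timestamps, and in \rtsmarmone the \textsf{FindKthZero} scan takes time that depends on $idx$, which could leak the order. We would handle this by stipulating constant-time selection, meaning a full $n$-bit scan every iteration. This is consistent with the $O(n)$-per-iteration cost the paper already accounts for, and it makes each $t_i$ a function of $i$ alone; residual timing channels fall under the out-of-scope side channels. A secondary subtlety is that \textsf{PRNG.RandInt} must map bits to $[1,rem]$ without bias, for example by rejection sampling. Otherwise $\sigma$ is only close to uniform, and we would add the statistical distance, negligible in $\lambda$, to the bound.

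Given these, $\mathcal{A}$'s view in $H_2$ is independent of $\sigma^\Pi$, so $\Pr[\mathcal{A}\text{ wins } H_2]=1/n!$. Summing the hybrid gaps gives $\Adv^{\mathrm{MOI}}_{\mathcal{A},\Pi}(\lambda)\le \Adv^{\mathrm{PRF}}+\Adv^{\mathrm{PRNG}}\le \mathit{negl}(\lambda)$.
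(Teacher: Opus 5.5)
Your proposal is correct and takes the same route as the paper's proof: (S1) and (S2) are inherited from the TrustZone-M architecture, and (S3) is proved by a hybrid that first replaces the seed (PRF assumption) and then the PRNG stream (PRNG assumption), after which the \textsf{FindKthZero} bijection makes the order a uniform permutation with winning probability $1/n!$. Your two extra stipulations do not appear in the paper, which simply asserts that $\mathcal{A}$'s view is independent of the $r_i$. The first is a constant-time \textsf{FindKthZero}: the selection runs with interrupts enabled, so an early-exit scan would leak $\mathit{idx}$ through the KFV timestamps. The second is an unbiased \textsf{RandInt}. Both make explicit conditions that the paper's independence claim actually needs.
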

 
\begin{proof}
(S1) and (S2) are inherited from Theorem~\ref{thm:tzm} unchanged:
SMARM+PRNG retains the same TrustZone-M architecture, the same NSC entry
point, and the same Secure-RAM storage for $k$ and all derived material.
 
We prove (S3) via the following games:
 
\medskip
\noindent\textbf{Game $G_0$: Real scheme.}
This is the $\mathsf{MOI}^\Pi_{\mathcal{A}}(\lambda)$ game
(Definition~\ref{def:moi-game}) with $\Pi$ being \rtsmarmone, where the measurement is performed in the following order:
\begin{align*}
  \mathit{seed}    &= \mathrm{SHA256}(\mathit{chal} \,\|\, k) \\
  \text{PRNG state} &= \mathrm{PRNG.Init}(\mathit{seed}) \\
  r_i              &= \mathrm{PRNG.RandInt}(1,\, n-i)
                      \ \text{for } i = 0,\dots,n-1 \\
  \mathit{idx}_i   &= \mathrm{FindKthZero}(\mathit{used},\, r_i)
\end{align*}
The sequence $(\mathit{idx}_0,\dots,\mathit{idx}_{n-1})$ constitutes the
measurement order $\sigma^\Pi$.
Let $p_0 = \Pr[\mathcal{A}\ \text{wins in } G_0]$.
 
\medskip
\noindent\textbf{Game $G_1$: Replace seed with uniform random.}
We replace $\mathit{seed} = \mathrm{SHA256}(\mathit{chal} \,\|\, k)$ with
$\mathit{seed} \xleftarrow{\$} \{0,1\}^{256}$.
All other steps are identical.
Let $p_1 = \Pr[\mathcal{A}\ \text{wins in } G_1]$.
 
\begin{lemma}
\label{lem:g0-g1}
$|p_0 - p_1| \leq \Adv^{\mathrm{PRF}}_{\mathcal{B}_1}(\lambda)$.
\end{lemma}
\begin{proof}
Suppose $|p_0 - p_1| > negl(\lambda)$.
Then, we can construct a PRF distinguisher $\mathcal{B}_1$ as follows:
$\mathcal{B}_1$ receives oracle access to either $F = \mathrm{SHA256}(k,\cdot)$
(real PRF) or a truly random function $\mathcal{R}$.
It queries its oracle on $\mathit{chal}$ to obtain the seed, runs the rest
of \rtsmarmone honestly, and outputs whatever $\mathcal{A}$ outputs.
When $\mathcal{B}_1$ oracle is the real PRF, it simulates $G_0$ perfectly;
when it is $\mathcal{R}$, it simulates $G_1$ perfectly.
Hence $\Adv^{\mathrm{PRF}}_{\mathcal{B}_1}(\lambda) \ge |p_0 - p_1| > negl(\lambda)$,
contradicting Assumption~\ref{asm:sha256-prf}.
\end{proof}
 
\medskip
\noindent\textbf{Game $G_2$: Replace PRNG output with uniform random.}
Given a uniformly random seed (as in $G_1$), we further replace the PRNG
outputs $r_0,\dots,r_{n-1}$ with truly uniform random integers
$r_i \xleftarrow{\$} [1, n-i]$ for each $i$.
Let $p_2 = \Pr[\mathcal{A}\ \text{wins in } G_2]$.
 
\begin{lemma}
\label{lem:g1-g2}
$|p_1 - p_2| \leq \Adv^{\mathrm{PRNG}}_{\mathcal{B}_2}(\lambda)$.
\end{lemma}
\begin{proof}
A PRNG seeded with a uniformly random value (as in $G_1$) is computationally indistinguishable from a uniform random string by Assumption~\ref{asm:prng}.
Any PPT distinguisher $\mathcal{B}_2$ that separates $G_1$ from $G_2$
directly breaks Assumption~\ref{asm:prng}.
\end{proof}
 
In $G_2$, each $r_i$ is a fresh uniform sample from $[1, n-i]$,
independent of $\mathcal{A}$'s view.
The mapping $\mathrm{FindKthZero}(\mathit{used},\, r_i)$ is a bijection
between $[1, n-i]$ and the set of remaining (unmeasured) block indices at
step $i$.
Hence, the resulting sequence
$(\mathit{idx}_0,\dots,\mathit{idx}_{n-1})$ is a uniformly random
permutation over $[n]$: at each step $i$, the selected block is chosen
uniformly at random from the $n - i$ remaining blocks, and this process
produces the uniform distribution over $\mathrm{Perm}([n])$.
As such, $\mathcal{A}$'s best strategy in $G_2$ is random guessing:
$p_2 = \frac{1}{n!}$, resulting in:
\[
\begin{aligned}
  \Adv^{\mathrm{MOI}}_{\mathcal{A},\Pi}(\lambda)
  &= p_0 - \tfrac{1}{n!}
  \leq |p_0 - p_1| + |p_1 - p_2| + (p_2 - \tfrac{1}{n!})\\
  &\leq \Adv^{\mathrm{PRF}}_{\mathcal{B}_1}(\lambda)
       + \Adv^{\mathrm{PRG}}_{\mathcal{B}_2}(\lambda)\\
  &\leq \mathit{negl}(\lambda).
\end{aligned}
\]
This proves Theorem~\ref{thm:prng}.
 
\end{proof}
 
\subsection{Security of \rtsmarmtwo}
\label{sec:proof-fpe}
 
\begin{theorem}[\rtsmarmtwo Security]
\label{thm:fpe}
\rtsmarmtwo satisfies (S1)-(S3).
Specifically, let $\Pi$ be \rtsmarmtwo.
For every PPT adversary $\mathcal{A}$,
\[
  \Adv^{\mathrm{MOI}}_{\mathcal{A},\Pi}(\lambda) \;\leq\; \mathit{negl}(\lambda),
\]
\end{theorem}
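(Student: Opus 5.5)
The plan is to mirror the game-hopping structure used for Theorem~\ref{thm:prng}.

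\textbf{(S1) and (S2).} These are inherited from Theorem~\ref{thm:tzm}. \rtsmarmtwo keeps the same NSC entry point (Assumption~\ref{asm:tz-entry}) and keeps interrupt disablement around each HMAC.Update. It also keeps $k$, together with the derived $(\text{ffx\_key},\text{tweak})$, in Secure RAM (Assumption~\ref{asm:tz-isolation}). The only new state, the FFX key schedule, is likewise Secure-only. For that reason, and because \textsf{FFX.Enc} is only ever invoked inside the Secure Service, the adversary gets no encryption or decryption oracle. This is exactly the side condition required by Assumption~\ref{asm:ffx-prp}.

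\textbf{(S3).} I would define three games.

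Game $G_0$ is the real $\mathsf{MOI}$ game. In it, $(\text{ffx\_key},\text{tweak})=\mathsf{KDF}(k,chal)$ and $\sigma^\Pi(i)=\mathsf{FFX.Enc}_{\text{ffx\_key},\text{tweak}}(i)$ for $i\in[n]$.

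Game $G_1$ replaces the KDF output with a uniformly random pair. The bound $|p_0-p_1|\le \Adv^{\mathrm{PRF}}_{\mathcal{B}_1}(\lambda)$ follows exactly as in Lemma~\ref{lem:g0-g1}. I instantiate the KDF with keyed SHA-256 under Assumption~\ref{asm:sha256-prf}. $\mathcal{B}_1$ queries its oracle on $chal$, parses the result as key and tweak, runs the rest of $G_0$, and outputs $1$ iff $\mathcal{A}$'s guess $\hat\sigma$ matches. $\mathcal{B}_1$ can check this because it computes $\sigma^\Pi$ itself. The timestamps $t_1,\dots,t_n$ can be simulated, since they depend only on public quantities ($n$, $B$, and the timing of the constant-time FFX and hash). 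I would state this explicitly as the KFV model's assumption that the timing is order-independent.

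Game $G_2$ replaces $\mathsf{FFX.Enc}(\text{ffx\_key},\cdot)$ with a uniformly random permutation $\pi$ on $[n]$. The distinguisher $\mathcal{B}_2$ gets a PRP oracle and queries it on $0,\dots,n-1$ to build $\sigma$. These queries are made by the reduction, not by $\mathcal{A}$, so the oracle-free condition of Assumption~\ref{asm:ffx-prp} is respected from $\mathcal{A}$'s side. $\mathcal{B}_2$ then outputs whether $\mathcal{A}$ wins, which gives $|p_1-p_2|\le\Adv^{\mathrm{PRP}}_{\mathcal{B}_2}(\lambda)$. The tweak can be folded into the key here, since in $G_1$ both are uniform and independent of $\mathcal{A}$'s view.

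In $G_2$, $\sigma=\pi$ is uniform over $\mathrm{Perm}([n])$ and independent of $\mathcal{A}$'s view. Hence $p_2=1/n!$, and by the triangle inequality the advantage is at most $\Adv^{\mathrm{PRF}}+\Adv^{\mathrm{PRP}}\le negl(\lambda)$.

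\textbf{Main obstacle.} I expect the delicate step to be the $G_1\to G_2$ hop. Assumption~\ref{asm:ffx-prp} is phrased with a caveat about oracle access, and in small domains FFX only has concrete (not asymptotic) security. My approach would be to argue that $\mathcal{A}$'s view in the MOI game contains no FFX input/output pairs, because indices never leave Secure RAM and the timestamps are independent of $\sigma$. This lets the reduction be a standard PRP distinguisher making exactly $n$ queries. For the small-$n$ regime, I would note that the bound is the concrete FFX advantage, and that it is treated as negligible by assumption.
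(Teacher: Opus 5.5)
Your proposal is correct and follows the paper's proof step for step. (S1) and (S2) are inherited from Theorem~\ref{thm:tzm}, and (S3) uses the same three games: the real scheme, the KDF output replaced by uniform key material (PRF assumption), and FFX replaced by a uniform permutation (PRP assumption), ending with $p_2 = 1/n!$ and the triangle inequality. Your additions tighten the paper's argument without changing the route: your reductions output whether $\hat\sigma$ matches rather than ``whatever $\mathcal{A}$ outputs,'' and you state explicitly the order-independent-timing assumption that the paper relies on, without saying so, when it claims independence from $\mathcal{A}$'s view in $G_2$.
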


\begin{proof}
(S1) and (S2) are inherited from Theorem~\ref{thm:tzm} by the same
argument as in the proof of Theorem~\ref{thm:prng}.
We prove (S3) via the following games:
 
\medskip
\noindent\textbf{Game $G_0$: Real scheme.}
The measurement order in \rtsmarmtwo is:
\begin{align*}
  (\mathit{key},\, \mathit{tweak})
      &= \mathrm{KDF}(k,\, \mathit{chal}), \\
  \mathit{idx}_i
      &= \mathrm{FFX.Enc}_{\mathit{key},\,\mathit{tweak}}(i)
         \ \text{for } i = 0,\dots,n-1.
\end{align*}
Since $\mathrm{FFX.Enc}$ is a bijection over $[n]$, the sequence
$(\mathit{idx}_0,\dots,\mathit{idx}_{n-1})$ is $\sigma^\Pi$.
Let $p_0 = \Pr[\mathcal{A}\ \text{wins in }G_0]$.
 
\medskip
\noindent\textbf{Game $G_1$: Replace KDF output with uniform random key.}
We replace $(\mathit{key},\mathit{tweak}) = \mathrm{KDF}(k,\mathit{chal})$
with $(\mathit{key},\mathit{tweak}) \xleftarrow{\$} \{0,1\}^{2\lambda}$.
All other steps are identical.
Let $p_1 = \Pr[\mathcal{A}\ \text{wins in }G_1]$.
 
\begin{lemma}
\label{lem:fpe-g0-g1}
$|p_0 - p_1| \leq \Adv^{\mathrm{PRF}}_{\mathcal{B}_1}(\lambda)$.
\end{lemma}
\begin{proof}
We prove by reduction to the PRF security in Assumption~\ref{asm:sha256-prf}.
Suppose $|p_0 - p_1|$ is non-negligible.
We show how to construct a PPT distinguisher $\mathcal{B}_1$ against the PRF security of
$\mathrm{KDF}(k,\cdot)$.

$\mathcal{B}_1$ is given access to an oracle function
which is either $\mathrm{KDF}(k,\cdot)$ for a secret key $k$ or a uniformly random
function with the same output length.
To simulate the game, $\mathcal{B}_1$ queries $O$ on
$\mathit{chal}$ and parses the response as
$(\mathit{key},\mathit{tweak})$.
It then uses this pair as the FFX key material and runs the remainder of
$G_0$ exactly as specified for $\mathcal{A}$.

When the oracle is $\mathrm{KDF}(k,\cdot)$, the simulation is identical to $G_0$.
If the oracle is a uniformly random function, then
$(\mathit{key},\mathit{tweak})$ is uniform and independent, so the simulation is
identical to $G_1$.
Therefore, $\mathcal{B}1$ distinguishes the PRF oracle from a random function
with non-negligible advantage $|p_0-p_1|$, breaking Assumption~\ref{asm:sha256-prf} (since $\mathrm{KDF}(k,\cdot)$ is instantiated as
$\mathrm{HMAC\text{-}SHA256}(k,\cdot)$ and is assumed to be a PRF by
Assumption~\ref{asm:sha256-prf})
\end{proof}

\medskip
\noindent\textbf{Game $G_2$: Replace FFX with a truly random permutation.}
Given a uniformly random FFX key (as in $G_1$), we replace
$\mathrm{FFX.Enc}_{\mathit{ffx\_key},\mathit{tweak}}$ with a truly random
permutation $\pi \xleftarrow{\$} \mathrm{Perm}([n])$.
Let $p_2 = \Pr[\mathcal{A}\ \text{wins in }G_2]$.
 
\begin{lemma}
\label{lem:fpe-g1-g2}
$|p_1 - p_2| \leq \Adv^{\mathrm{PRP}}_{\mathcal{B}_2}(\lambda)$.
\end{lemma}
\begin{proof}
We prove by reduction to the FFX security in Assumption~\ref{asm:ffx-prp}.
Suppose $|p_1 - p_2|$ is non-negligible.
We show how to construct a PRP distinguisher $\mathcal{B}_2$ with oracle access to
either $\mathrm{FFX.Enc}_{k'}$ (real PRP) or a truly random
permutation $\pi$ over $[n]$.
$\mathcal{B}_2$ runs $\mathcal{A}$ using its oracle as the FFX instantiation
(querying on $i = 0,\dots,n-1$) and forwards $\mathcal{A}$'s output as
its own distinguishing bit.
When the oracle is the real PRP, this simulates $G_1$ perfectly;
when it is $\pi$, it simulates $G_2$ perfectly.
Hence this makes $\Adv^{\mathrm{PRP}}_{\mathcal{B}_2}(\lambda)$ also non-negligble,
contradicting Assumption~\ref{asm:ffx-prp}.
\end{proof}
 
In $G_2$, the measurement order is a uniformly random permutation
$\pi \xleftarrow{\$} \mathrm{Perm}([n])$, independent of everything the
adversary knows.
Hence, $p_2 = \frac{1}{n!}$, resulting in:
\[
\begin{aligned}
  \Adv^{\mathrm{MOI}}_{\mathcal{A},\Pi}(\lambda)
  &= p_0 - \tfrac{1}{n!}
  \leq |p_0 - p_1| + |p_1 - p_2| + (p_2 - \tfrac{1}{n!})\\
  &\leq \Adv^{\mathrm{PRF}}_{\mathcal{B}_1}(\lambda)
       + \Adv^{\mathrm{PRP}}_{\mathcal{B}_2}(\lambda)\\
  &\leq \mathit{negl}(\lambda).
\end{aligned}
\]
This proves Theorem~\ref{thm:fpe}.

\end{proof}

\section*{Acknowledgment}
The authors used Claude and ChatGPT as an AI assistant to support drafting and editing of portions of this manuscript, including literature review text, section organization, proof assistant, and language refinement. All technical content, experimental results, and conclusions were verified and approved by the authors

\bibliographystyle{ieeetr}
\bibliography{reference}


\end{document}